\documentclass[12pt]{article}
%%%%%%%%%%%%%%%%%%%%%%%%%%%%%%%%%%%%%%%%%%%%%%%%%%%%%%%%%%%%%%%%%%%%%%%%%%%%%%%%%%%%%%%%%%%%%%%%%%%%%%%%%%%%%%%%%%%%%%%%%%%%%%%%%%%%%%%%%%%%%%%%%%%%%%%%%%%%%%%%%%%%%%%%%%%%%%%%%%%%%%%%%%%%%%%%%%%%%%%%%%%%%%%%%%%%%%%%%%%%%%%%%%%%%%%%%%%%%%%%%%%%%%%%%%%%
\usepackage{graphicx}
\usepackage{subcaption}
\usepackage{amsmath,amssymb,setspace,amsfonts,enumitem,fullpage,amsthm,comment,color,bm}
\usepackage[longnamesfirst]{natbib}
\usepackage {palatino}
\usepackage[super]{nth}
\usepackage{epsf}
\usepackage[margin=10pt,labelfont=bf,labelsep=endash]{caption}
\usepackage[colorlinks,pagebackref=false]{hyperref}
\usepackage[usenames,dvipsnames]{xcolor}
\usepackage{titlesec}
\usepackage{dsfont}
\usepackage{chngcntr}
\usepackage{apptools}
\usepackage{epstopdf}
\AtAppendix{\counterwithin{lemma}{section}}
\usepackage{footmisc}
\AtAppendix{\counterwithin{lemma}{section}}
\AtAppendix{\counterwithin{proposition}{section}}
\graphicspath{{./Graphs/}}

\allowdisplaybreaks %This allows equations to split over pages

\setcounter{MaxMatrixCols}{10}
%TCIDATA{OutputFilter=Latex.dll}
%TCIDATA{Version=5.00.0.2606}
%TCIDATA{<META NAME="SaveForMode" CONTENT="1">}
%TCIDATA{BibliographyScheme=Manual}
%TCIDATA{LastRevised=Tuesday, October 03, 2006 10:15:35}
%TCIDATA{<META NAME="GraphicsSave" CONTENT="32">}

\definecolor{dark-red}{rgb}{0.4,0.15,0.15}
\definecolor{dark-blue}{rgb}{0.15,0.15,0.75}
\definecolor{medium-blue}{rgb}{0,0,0.5}
\hypersetup{
    pdftitle={Pandora's Ballot Box},
    pdfauthor={Buisseret, Van Weelden},
    citecolor=dark-blue,
    bookmarksnumbered=true,
    urlcolor=BrickRed,
    linkcolor=dark-blue
}

%%%%%%%%%%%%
\makeatletter
  \renewcommand\@seccntformat[1]{\csname the#1\endcsname.{\hskip.7em\relax}} %Gets period after section title
\makeatother
%%%%%%%%%%%%
%%% Next bit gets the proof environment to show "proof" in bold rather than italics
\makeatletter
\renewenvironment{proof}[1][\proofname] {\par\pushQED{\qed}\normalfont\topsep6\p@\@plus6\p@\relax\trivlist\item[\hskip\labelsep\bfseries#1\@addpunct{.}]\ignorespaces}{\popQED\endtrivlist\@endpefalse}
\makeatother
%%%

\newtheorem{corollary}{Corollary}

\newtheorem{proposition}{Proposition}

\newtheorem{remark}{Remark}

\theoremstyle{definition}

\addtolength{\textwidth}{.5in}
\addtolength{\hoffset}{-.15in}
\addtolength{\textheight}{.25in}
\addtolength{\voffset}{-.25in}

\parindent.25in %\baselineskip=.2in
\setlength{\parskip}{6pt plus 1pt minus 1pt} %this sets space between paragraphs
\setlength{\footnotesep}{0.17in}
\titlespacing\section{0pt}{10pt plus 2pt minus 2pt}{4pt plus 2pt minus 2pt} %Tightens up spacing after section title
\titlespacing\subsection{0pt}{6pt plus 2pt minus 2pt}{2pt plus 2pt minus 2pt} %Tightens up spacing after subsection titile
\titlespacing\subsubsection{0pt}{6pt plus 2pt minus 2pt}{0pt plus 2pt minus 2pt} %Tightens up spacing after subsubsection title

%Redefine how \autoref command capitalizes

%Math shortcuts
\renewcommand{\epsilon}{\varepsilon}

 %indicator symbol

%% Next bit gets space between footnote number and text
\let\oldfootnote\footnote
\renewcommand\footnote[1]{\oldfootnote{\hspace{.5mm}#1}}
%%  Double-space footnotes

%%
\setcitestyle{aysep={}}

%Other shortcuts
\usepackage{setspace}

%\doublespacing
%\onehalfspacing

 %creates an email command
 %creates a possesive cite command

\renewcommand{\bar}{\overline}

\begin{document}

\begin{titlepage}

\title{Pandora's Ballot Box: Electoral Politics of Direct Democracy\footnote{We are grateful to Daniel Carpenter, Francisco Espinosa, Amanda Friedenberg, Ryan Hubert, Gilat Levy, Luis Martinez, and seminar audiences at Harvard, Georgetown, Yale, NYU, Arizona, Federal University of Rio Grande do Sul, the IE Political Economy Annual Workshop, PolEconUK,  EUI, Barcelona GSE, and UQAM.}}   
\author{ \;\;\;\;Peter Buisseret\footnote{Department of Government, Harvard University, \emph{Email}: \url{ pbuisseret@fas.harvard.edu}} \,\,\,\,\,Richard Van Weelden\footnote{Department of Economics, University of Pittsburgh, \emph{Email}: \url{rmv22@pitt.edu}}}
%\normalsize

\maketitle

\begin{abstract}
We study how office-seeking parties use direct democracy to shape elections. A party with a strong electoral base can benefit from using a binding referendum to resolve issues that divide its core supporters. When referendums do not bind, however, an electorally disadvantaged party may initiate a referendum to elevate new issues in order to divide the supporters of its stronger opponent. We identify conditions under which direct democracy improves congruence between policy outcomes and voter preferences, but also show that it can lead to greater misalignment both on issues subject to direct democracy and those that are not.
\end{abstract}

\begin{center}
\textbf{Word Count}: 10,500
\end{center}

%RV2:  Need to add word count for APSR.
\thispagestyle{empty}
\end{titlepage}

%% main text

%\setcounter{page}{1} 
%\setcounter{footnote}{0}

\let \markeverypar \everypar
\newtoks \everypar
\everypar \markeverypar
\markeverypar{\the \everypar \looseness=-2\relax}
\newpage
\setcounter{page}{1}
\begin{quote}{\emph{``This is a vital decision for the future of our country and I believe we should also be clear that it is a \emph{final} decision.''}}\end{quote}
\vspace{-10mm}
\begin{flushright}
David Cameron, 2016
\end{flushright}
\vspace{-8mm}
\begin{quote}{\emph{``\emph{Finality}, sir, is not the language of politics.''}}\end{quote}
\vspace{-10mm}
\begin{flushright}
Benjamin Disraeli, 1859
\end{flushright}
\vspace{-8mm}

\section*{Introduction}

Direct democracy occupies a central role in contemporary policymaking around the world. Under representative democracy, citizens have only one vote to cast for politicians who choose policies on many issues. By \emph{unbundling} these issues \citep{besley2008issue}, direct democracy can strengthen congruence between policies and voters' preferences \citep{matsusaka2005direct}. Existing work identifies a range of mechanisms by which  direct democracy helps voters to sanction, circumvent, and discipline politicians (e.g., \citealp{gerber1996legislative}, \citealp{matsusaka2001political}, \citealp{MT:04}, \citealp{prato2017hidden}, \citealp{le2018popular}). 

While voters can use direct democracy to constrain policymakers, it is often the politicians themselves who resort to direct democracy. Of the 57 referendums on European integration held across European Union (EU) member states, 24 were initiated in the absence of a constitutional mandate \citep{oppermann2017derailing}. Recent examples include the United Kingdom's (UK) Brexit referendum of 2016, Hungary's 2016 referendum on EU migrant quotas, and the 2015 Greek Bailout referendum. Governments held discretionary referendums on international trade agreements (Costa Rica 2007, Netherlands 2016), social policies (Australia 2017) and territorial governance (Canada 1980 and 1995, Scotland 2014), with at least one political party pushing for the referendum. Politicians even shape and promote ballot initiatives. In 1994, the California Republican Party's support for Proposition 187\footnote{Proposition 187, passed by voters but subsequently ruled unconstitutional, prohibited undocumented immigrants from accessing health care, public education, and other government services in the state of California.} was critical in getting the measure on the ballot: it provided almost half of total contributions \citep{hasen2000parties} and also ``rolled out a 200,000 piece fundraising mailing for the initiative and helped field an army of paid signature gatherers to finish the task of qualification'' \citep[88]{Schultz1996}.

%; this is clearly highlighted in the campaigns for Propositions 187  and 209 (on affirmative action, in 1996) in California \citep{nicholson2021voting}. 

A variety of strategic motives may drive politicians' pursuit of direct democracy. In some cases, a referendum is a legal requirement to implement a particular policy.\footnote{Countries with provisions for legally binding referendums include France, Georgia, Greece, Ireland, Russia and Switzerland.  In the United States certain policies, from constitutional amendments to raising property taxes and issuing school bonds, often require direct voter approval.}   It can be used to strengthen a government's hand in international negotiations \citep{putnam1988diplomacy, schneider1994change}, or resolve disputes between different government branches \citep{breuer2008problematic, tridimas2007ratification} or between different parties or factions \citep{hug2004occurrence, rahat2009elite, morel2001rise, qvortrup2006democracy}.  %Or they may be used to confer legitimacy on signature legislation (e.g., Lula's unsuccessful 2005 gun ban referendum in Brazil). 
We focus on domestic \emph{electoral} considerations of \emph{discretionary} referendums. Our paper asks: under what circumstances can an office-motivated party use direct democracy to improve its electoral prospects? How does this depend on the initial degree of polarization both \emph{between} and \emph{within} parties? %How does it depend on the degree of polarization between parties, and the intensity and diversity of preferences within parties? 
And, how does it depend on a party's initial electoral advantage or disadvantage?  

The prevailing electorally-motivated account of referendums is that politicians use them as a ``lightning rod'' \citep{bjorklund1982demand} for conflict on issues that cut across traditional party lines. This literature views referendums as a way to take difficult issues out of the electoral arena, since binding policymakers to whichever outcome wins a majority  ``removes the policy from the realm of discretionary government choices'' \citep[246]{oppermann2017derailing}. This recourse is especially valuable for a relatively divided party.  The reason is that by isolating the conflict away from electoral politics, voters  ``need not have their usual party allegiances distorted'' \citep[448]{aylott2002let}.  %The electoral motivations for referendum use are thus universally interpreted as a means to take difficult issues out of the electoral arena. 

We contend that this account is critically incomplete because direct democracy frequently does \emph{not} bind policymakers.  Even if legally binding, policymakers can often circumvent democratic mandates \citep{gerber2004direct}.  For example, in California's 2010 Attorney General Election, Kamala Harris and Steve Cooley divided on whether to defend Proposition 8, which banned same-sex marriage, in any future lawsuit.\footnote{See ``Kamala Harris-Steve Cooley race could affect Prop. 8''. \emph{SFGate}, November 8, 2010. \url{https://rb.gy/fmgqmd}.} Harris won, and Proposition 8 was invalidated when California chose not to defend the law in court. Proposition 187, mentioned above, met a similar fate after Gray Davis became governor in 1999 and withdrew California's legal appeals.  Florida's Amendment 4 called for restoring felons' voting rights, but subsequent legislation by the Republican-led Senate  included an exemption for felons with unpaid fines, greatly reducing the effect of the  Amendment.  After a majority of voters in Colombia's 2016 referendum rejected the government's peace agreement with FARC, the government implemented a revised peace agreement legislatively. Similarly, the European Union implemented a revised European Constitution via %an intergovernmental treaty amendment 
the Treaty of Lisbon after the initial draft was rejected by French and Dutch voters. 

When direct democratic mandates do not bind policymakers, direct democracy \emph{can} settle conflicts but often does \emph{not}. A clear example is the central role of the UK's 2016 Brexit referendum in British politics for several years and the next two General Elections.  In the referendum's aftermath, the major parties subsequently adopted diverging positions, with the Conservatives viewed as more able to `get Brexit done'.   Given that 65 per cent of Labour supporters voted to \emph{Remain},\footnote{\emph{How Britain voted at the EU referendum}, YouGov, June 27 2016, \url{https://rb.gy/vh4dba}.} Labour's Deputy Leader Tom Watson concluded in the 2019 election that ``the simple truth is---whatever anyone says---that Labour is a `Remain' party.''\footnote{See {\emph{Jeremy Corbyn faces calls to resolve Labour Brexit divisions}, BBC, 22 September 2019.}} Similarly, referendums in Quebec (1980) and Scotland (2014) did not settle independence; in 1995 another Quebec referendum was held, and, on June 28, 2022, backed by a majority of the Scottish Parliament, First Minister Nicola Sturgeon announced plans for a second independence referendum in 2023. 

Finally, direct democracy is often not intended to settle conflicts---on the contrary, it is frequently wielded to \emph{intensify} their electoral salience.  Hungary's 2016 EU migrant settlement referendum reflected incumbent Fidesz's goal of winning back support from radical right alternative Jobbik (\citealp{gessler20172016}).  In 1996 Californian Republicans promoted Proposition 209, banning affirmative action, in order to ``split Democratic support for President Bill Clinton'' \citep{smith2001initiative}. In Switzerland, ``[t]he populist right-wing party ... relied extensively on questions pertaining to the relationship with the EU and immigration---on both those topics their immediate competitors...held positions that were not fully backed by their supporters'' \citep[613]{leemann2015political}. And, it is likely that Taiwan's 2004 referendum on relations with China was timed to coincide with nationalist incumbent Chen Shui-bian's presidential re-election contest \citep{rawnsley2005peaceful}. 

%This is especially important when the parties are imperfectly bound by direct democratic mandates, as illustrated by the Brexit referendum. Among Conservative voters in the previous 2015 General Election, 61 per cent voted \emph{Leave}; conversely, 65 per cent of Labour supporters voted to \emph{Remain}.\footnote{\emph{How Britain voted at the EU referendum}, YouGov, June 27 2016, \url{https://rb.gy/vh4dba}.} Labour Deputy Leader Tom Watson concluded in the 2019 election that ``the simple truth is---whatever anyone says---that Labour is a Remain party''.\footnote{See {\emph{Jeremy Corbyn faces calls to resolve Labour Brexit divisions}, BBC, 22 September 2019.}}  Voters appear to have taken note: of the 54 constituencies that the Conservatives gained from Labour in 2019, 52 voted \emph{Leave} three years earlier.
%

Motivated by these contentions, we develop a new theoretical framework to analyze the electoral aspect of direct democracy. Our model features two dimensions of policy: a \emph{traditional} or `partisan' issue that defines the parties, and a \emph{second} or `emerging' issue that divides voters both within and across parties. The traditional issue is most naturally interpreted as a partisan issue conflict such as redistribution, or liberalism versus conservatism. The second issue could capture disagreement over an international treaty, same-sex marriage, or separatism which is subject to direct democracy.

Our main presentation considers competition between two parties, subsequently extending our analysis to incorporate a third party. Each voter aligns with one of the two parties---`Left' or `Right'---on the traditional issue, and so she is one of that party's `core supporters'.  Voter preferences are heterogeneous on the second issue, even among each party's core supporters, and there is aggregate uncertainty about both the direction and intensity of these preferences.

Parties always implement their core supporters' majority-preferred traditional policy. Absent direct democracy, parties also implement what they believe to be their supporters' majority-preferred policy on the second issue. We consider two perspectives on how policy is made after a referendum or initiative: a \emph{binding} context in which parties are forced to implement whichever alternative secures a majority, and a \emph{non-binding} context in which parties continue to implement their supporters' majority-preferred outcome even after a referendum. Thus, unlike a binding referendum which commits parties to the winning policy, a non-binding referendum simply reveals information about voter preferences on the emerging issue.  This information may, in turn, alter a party's positioning on that issue.  While we view the non-binding feature as central in most referendums, binding referendums are the bedrock of existing theoretical work on direct democracy and so are an important benchmark.  

Our framework focuses on a single decision: whether or not to call a referendum on the second issue. We evaluate this choice from the perspective of a leader whose objective is to secure the election of one of the two parties, which we take to be the Right party. This could be the party leader or national committee, or it could be an influential donor deciding whether to commit resources to support a ballot initiative.  %This could reflect purely office-seeking motives, or a desire to secure victory for the party's partisan policy. Thus, the agent could be a party leadership or national committee choosing whether to hold a discretionary referendum. It could also capture an influential donor deciding whether to commit resources to support a ballot initiative in order to secure  its preferred party's election.  

%\noindent\textbf{Results.} 
We start with a binding benchmark in which policymakers are forced to implement the majority-preferred policy after any referendum, reflecting either constitutional fiat or un-modeled political constraints. While the parties' core supporters always diverge on the partisan issue, their majorities may initially either align or mis-align on the second issue, which determines whether the second issue is initially salient in the election.

If the parties' majorities initially divide on the second issue, absent a referendum the election is fought on both the traditional \emph{and} second issue. A fraction of each party's core supporters mis-align with their traditional party on the second issue, and some may feel strongly enough to vote for the other party.  A party in the majority---with more voters to lose---or a party that is evenly divided on the second issue---and so likely to lose a larger share of its base---is most vulnerable in a multi-issue contest. %For example, some British Conservatives preferred to Remain in the EU, and some Labour-identifiers preferred to Leave. A fraction of each party's core supporters may feel strongly enough about the second issue to abandon their traditional party. As a result, each party expects to lose a fraction of its core supporters, but also to win over a fraction of its opponents'. The party with a larger minority of mis-aligned core supporters expects to lose a larger share of its base than it expects to gain from its opponent, and is therefore relatively more vulnerable in a multi-issue contest.  

When direct democracy mandates bind, both parties adopt the majority-preferred policy on the second issue after a referendum, removing it from the election and focusing the contest solely on the partisan issue. We therefore predict that binding referendums are most likely to be initiated by parties with a majority of support on the partisan issue, and that are more divided on the emerging issue. This is consistent with the traditional account of referendums as `lightning rods'.  However, we derive new predictions about political contexts in which they are used.  For example, because higher polarization reduces cross-over voting in a multi-issue election, it reduces a more divided party's incentive  to initiate a referendum with the goal of taking the second issue off the table.

We then transition to a non-binding setting, in which we assume that parties implement their core supporters' majority-preferred policy with or without a referendum.  As a consequence, not only may referendums fail to restore single-issue elections by settling inter-party conflicts on the second issue, but they may also \emph{create} multi-issue elections by revealing unanticipated conflicts between the parties' core supporters.% on the second issue.

We continue to predict that divided parties  favor a referendum to resolve the second issue when the parties initially misalign. Surprisingly, given that the referendum may not succeed in resolving the second issue, this incentive can be even \emph{stronger} for the relatively divided party than with a binding  referendum that always settles the second issue.  
To see why, recognize that even a minority party wants to resolve the second issue if it is sufficiently divided. A non-binding referendum fails to resolve the issue only when voters in both parties are evenly divided on the second issue. This leads a larger share of voters to cross party lines in the subsequent election, to the net benefit of the minority party, increasing  its willingness to gamble on a referendum. %This increases the minority's willingness to gamble on a referendum.
%
% referendum fails to resolve the second issue only when voters in both parties are evenly divided on the second issue. This leads a larger share of voters to cross party lines in the subsequent election, to the net benefit of the minority party.  

%A more divided party seeks to resolve the second issue even if in the minority, and thus the threshold on how small it can be and still initiate a referendum decreases when non-binding. %This net benefits the minority party, which cares even more about stealing its opponent's core supporters than retaining its own.
%RV2:  Edits here to highlight that minority party is also the more divided one.
%PB2: made another round of edits.  %RV3:  Slight edits to combine last two sentences

We also identify a novel context in which referendums may be electorally valuable: when the parties' majorities initially align on the second issue, but one party is at a partisan disadvantage.  This is because a referendum may transform a single-issue election into a multi-issue contest if it mis-aligns the parties' majorities, and a sufficiently disadvantaged party gambles on a referendum in the hopes of dividing its stronger opponent. This gambling incentive is strongest for a weaker party that is relatively united on the second issue, and particularly when polarization between the parties on the traditional issue is relatively low. This is because low inter-party polarization makes the stronger party's core supporters more susceptible to the minority party's appeal on the secondary issue.

We then assess the normative appeal of direct democracy. A prominent literature \citep{besley2008issue, matsusaka2005direct} emphasizes how direct democracy enhances voter welfare by unbundling policies and forcing policymakers to take the majority-preferred position.  We show that this is not always the case, particularly when the referendum is non-binding.  When the parties' majorities initially mis-align on the second issue, a referendum always increases policy alignment with the majority on the second issue, but it may reduce alignment on the traditional issue.  When the parties initially align on the second issue, a non-binding referendum may reduce the probability of the majority-preferred policy on \emph{both} issues.  While previous work highlights the risks of special interest influence \citep{broder2000democracy, matsusaka2001political}, low turnout \citep{RSS:22}, or an ill-informed electorate \citep{gerber1995campaign, boehmke2007selection}, we show that direct democracy can lead to policy misalignment \emph{even if} voters always choose the majority-preferred policy in the referendum.  

%An empirical literature documents domestic political strategies of discretionary referendums with an indirect focus on electoral machinations.  This includes policy-motivated referendums to resolve impasse between executive and legislative branches  \citep{breuer2008problematic, tridimas2007ratification}  or between a legislative majority and minority \citep{hug2004occurrence, rahat2009elite}, or between factions within a divided majority \citep{morel2001rise, qvortrup2006democracy}. However, electoral motives for referendum use are universally interpreted as a means to take difficult issues out of the electoral arena. We show that while this is the \emph{only} possible account in a context where direct democratic mandates bind, the absence of a binding commitment rationalizes how referendums can also be used to heighten issue salience for electoral advantage. 

Our emphasis on non-binding referendums distinguishes our analysis from existing theoretical accounts, which view binding mandates as the \emph{defining} feature of direct democracy \citep{besley2008issue, MT:04}.  %\cite{gerber2004direct} emphasize that politicians can circumvent direct democratic outcomes at a subsequent legislative or implementation phase. However, their analysis begins after an exogenous initiative has passed, and does not analyze electoral considerations.  In \cite{xefteris2011referenda}, a policymaker's choice to implement a referendum policy conveys information about the policymaker's alignment, facilitating electoral accountability.
The few papers on how politicians can circumvent direct democratic outcomes at a subsequent legislative or implementation phase \citep{gerber2004direct}, and how voters can learn from a policymaker's choice of whether to do so \citep{xefteris2011referenda}, abstract from the question of which referendums will be proposed. This question is the focus of our analysis.
%The vast majority of work focuses on accountability. It studies how the indirect threat of reversion to a status quo disciplines legislative proposals (\citeauthor{gerber1996legislative} \citeyear{gerber1996legislative}, \citeauthor{matsusaka2001political} \citeyear{matsusaka2001political}), investments in policy expertise (\citeauthor{prato2017hidden} \citeyear{prato2017hidden}), and executive action (\citealp{MT:04}, \citeauthor{besley2008issue} \citeyear{besley2008issue}, \citeauthor{le2018popular} \citeyear{le2018popular}). These papers conceive of direct democracy as a tool used by voters against politicians. By contrast, we study how direct democracy can be used by politicians against \emph{other} politicians. Our account also differs from research that studies how politicians can use direct democracy to shape international negotiations \citep{putnam1988diplomacy, schneider1994change}.  

We close by outlining some extensions. We %show how electoral incentives to use direct democracy are shaped by the presence of 
introduce a third party that emulates one of the two parties on the traditional issue and show when a party may favor a referendum to diffuse  the third-party threat. We also consider direct democracy as an instrument to mobilize partisan turnout---a widely-attributed strategic motive in the United States, in which initiatives typically appear on the ballot simultaneously with general elections.  %Finally, we show that the basic strategic calculations exist if referendums directly change, instead of simply reveal, voter preferences.  

\section*{Model} 

\label{s:model}

%RV2:  Didn't add anything but three points we could highlight if we wanted to: 1.  Noise voters equivalent to assuming parties maximize expected vote share. 2.  Having noise voters not vote in referendum is another assumption biasing in favor of a referendums.  3.  We only allow a referendum on second issue.  Could argue that's because policy development needs to be done by politicians to implement, we can only have referendum on issues that are simple and binary enough.

\noindent\textbf{Preliminaries.} Two parties, Left ($L$) and Right ($R$), compete for the support of a unit mass of voters. There are two dimensions of policy disagreement, $x$ and $y$, and in each dimension there are two policy positions, $0$ and $1$. The traditional issue dimension is most naturally interpreted as a left-right conflict, such as high levels of redistribution ($x=0$) versus low redistribution ($x=1$).\footnote{For other papers in which parties are fixed at a given policy on a traditional issue see \citet{KP:14} and \citet{BVW:20, BVW:21}.}  We interpret the second dimension as an emerging issue cleavage. It could capture participation in an international agreement or national union;  it could alternatively reflect a social cleavage, such as abortion or same-sex marriage. \autoref{fig:a} illustrates the policy dimensions.
%RV2:  Should we add more references to above FN?  Not sure about word count... %PB: I think it's fine to leave it... all the important ones are there!

\begin{figure}[t!]
\centering
  \includegraphics[width=.45\linewidth]{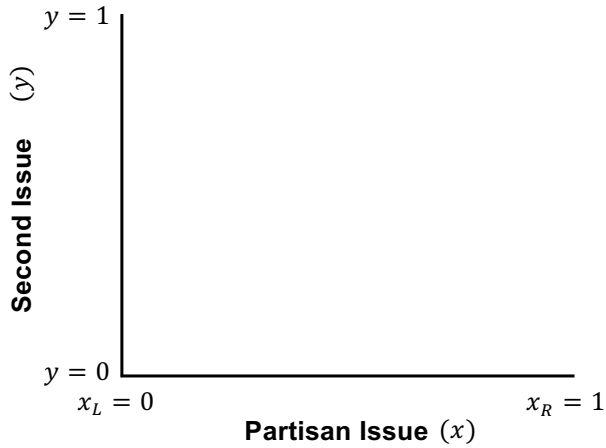}
\caption{Two Dimensions of Policy.} 
\label{fig:a}
\end{figure}

\noindent \textbf{Voter Preferences.} A fraction $\mu$ of voters are \emph{policy}-oriented voters, and a fraction $1-\mu$ are \emph{noise} voters (\citeauthor{besley2008issue}, \citeyear{besley2008issue}). The role of noise voters is solely to introduce tractable aggregate uncertainty in election outcomes. We describe the preferences and behavior of each group in turn.

\noindent\emph{Policy Voters.} Each policy-oriented voter $i$ derives net payoff $p>0$ from her preferred policy $x^i\in\{0,1\}$ on the traditional issue. She also derives a net benefit ${b}^i\in\mathbb{R}$ from policy $y=1$. Thus, if policy $(x,y)\in\{0,1\}^2$ is implemented, policy voter $i$'s payoff is 
\[
u(x,y;x^i,b^i)=p\times \begin{cases}
1 &\text{if $x=x^i$}\\
0 &\text{if $x\neq x^i$}
\end{cases}
\;\;+ {b}^i \times y.
\]
A fraction $r\in(0,1)$ of voters are `core supporters' of the Right party: they are \emph{conservatives} with ideal policy $x^i=1$.  The remaining fraction, $1-r$, of voters are \emph{liberals} with ideal policy $x^i=0$---the core supporters of the Left party. For simplicity, we assume $r$ is known.%\footnote{Results extend when $r$ is uncertain: parameter $r$ is replaced with its expectation.}. 

Policy preferences are heterogenous within each party on the $y$ dimension.  Among the core supporters of party $J\in\{L,R\}$, individual preference ${b}^i$ is drawn from a symmetric distribution with mean $b_J+\gamma$.  So $b_L+\gamma$ is the average (and median) liberal's value from $y=1$ and $b_R+\gamma$ is the average/median value for conservatives.  Parameters $b_L$ and $b_R$ capture differences in attitudes on the emerging issue between the two parties' core supporters. The random variable $\gamma$ is a common stochastic shock, capturing changes in attitudes and preferences on the emerging issue. It is drawn from a strictly increasing mean-zero distribution $G(\cdot)$ with symmetric, continuously differentiable, and strictly quasi-concave density $g$. As $\gamma$ is initially unknown, so too is the support for $y=1$ in each party.  A large positive (negative) $\gamma$ realization implies that average attitudes in \emph{both} parties shift in favor of $y=1$ (respectively $y=0$).

We assume $b^i-{b}_J-\gamma$ is drawn from a strictly increasing mean-zero and logconcave distribution $B(\cdot)$, with symmetric and continuously differentiable density $b(\cdot)$. %Thus, members of the parties have heterogenous preferences on the $y$-dimension, with $b_L+\gamma$ the average (and median) liberal's value from $y=1$ and and the average/median conservative's value from $y=1$.
To economize on parameters, we assume that $B(\cdot)$ and $G(\cdot)$ have full support on $\mathbb{R}$.

\noindent\emph{Noise voters.}  A fraction $1-\mu$ of voters are noise voters.  Of these, a uniform random share $\eta\in[0,1]$ vote for Right in the general election, and the remainder for Left, no matter the policies of each party.   For simplicity we assume that noise voters do not participate in a referendum, if one is held.

\noindent\textbf{Party Objectives and Platforms.} Throughout, we assume that the party leadership seeks to maximize the probability of winning the election.\footnote{This could reflect purely office-seeking motives. However, it is equivalent to assuming the leadership maximizes the probability that the party's traditional policy ($x=0$ for Left, $x=1$ for Right) wins the election. Thus, it could reflect the priorities of influential party leaders or donors who overwhelmingly prioritize securing their preferred traditional policy (e.g. low taxes).}

We assume that each party implements the policy expected to be preferred by a majority of its core supporters on any issue not subject to referendum.\footnote{This could be micro-founded as follows: prior to the election, a primary is held in each party to determine the party's leader. In that case, office-seeking candidates converge to the most-preferred policy of the (expected) median party supporter.}   Thus, on the traditional issue, Left always implements $x=0$ and Right always implements $x=1$. Similarly, on the emerging issue, party $J\in\{L,R\}$ implements $y=1$ if and only if $b_J \geq 0$.\footnote{If the expected median benefit is $0$ the party's median is equally likely to prefer either position, and we assume the party breaks the tie in favor of $y=1$.}

We adopt different perspectives on how parties make policy on the second issue after a referendum.  If the referendum binds, both parties are constrained to propose the winning policy.  If the referendum does not bind, however, party $J$ proposes $y=1$ if and only if the expectation of $b_J+\gamma$, given the information revealed in the referendum, is positive.  

\noindent \textbf{Timing.} At the outset, the incumbent party (Right) leadership chooses whether to hold a referendum on the second issue. If a referendum is held, all policy voters cast a ballot for either $y=1$ or $y=0$, and the outcome is publicly observed. An election is then held, in which each voter casts her ballot for either Left or Right. The majority-winning party implements its policy $(x,y)\in\{0,1\}^2$. To summarize:

\begin{itemize}\vspace{-2mm}
\item[0.] Nature draws random variables $\gamma$ and $\eta$. Neither of these realizations are observed by any agent.
\item[1.] \emph{Right}'s leadership decides whether to hold a referendum on the second issue.
\item[2.] Parties commit to policies on each issue.  
\item[3.] A general election is held, and the majority-winning party implements policy.
\end{itemize}

\noindent \textbf{Voter Behavior.} %The only choice facing party leaders is whether to initiate a referendum. 
We assume that policy voters cast their ballots \emph{sincerely} in both the referendum and the general election. In other words: a policy voter always votes for the policy or candidate that gives her a higher payoff.  In the referendum this means voters do not consider how their individual vote in the referendum may impact the general election, and each voter $i$ supports $y=1$ if and only if $b^i \geq 0$.  In the general election sincere voting only rules out the weakly dominated strategy of voting for the less preferred candidate.

\noindent \textbf{Parametric Assumptions.} Finally, we make two parametric assumptions to focus our analysis on the most interesting cases.

\noindent \emph{Assumption 1.} $b_L<0$ and $b_L < b_R$.

Our first assumption states that a majority of Left members initially favor policy $y=0$ on the second issue, and a larger fraction of voters in the Right party favor $y=1$ than in Left.   Assumption 1 imposes no loss of generality beyond $b_L \neq b_R$, which ensures that for some realizations of $\gamma$ the parties disagree on the $y$ policy.  It is without loss of generality because the labeling of the second dimension is arbitrary, and the game is zero-sum between the two office-seeking parties, so Left  benefits from a referendum if and only if Right does not.  

\noindent \emph{Assumption 2.} $1-\frac{1}{2\mu}<r<\frac{1}{2\mu}$.

Assumption 2 allows for $r \neq 1/2$, but requires that the parties' shares of core supporters cannot be too imbalanced.  It ensures the share of noise voters is large enough that election outcomes are uncertain.

\noindent \textbf{Equilibrium.} %We assume that policy voters cast their ballots sincerely in both a referendum and in the general election.\footnote{Sincere voting in the general election eliminates weakly dominated strategies at that stage. At the referendum stage, sincere voting rules out an individual who misrepresents her preference in order to influence the election.} In other words: each voter casts her ballot for the policy or candidate that yields the highest expected payoff. 
Given sincere voting, and that party positions are pinned down by the preferences of their core supporters, the only strategic choice is Right's decision of whether to call a referendum.  Our analysis characterizes that optimal decision. %Finally, we adopt tie-breaking rules that govern probability-zero events, and which are therefore without loss of generality.\footnote{Specifically: if a voter is indifferent between parties, she votes for the party of which she is a member, and if there is a tie in the general election, {Right} party wins the majority. Finally, when a party's core supporters divide equally across the two $y$-policies, we resolve that $y=1$ is the chosen policy.}

\section*{Benchmark: Binding Referendums}

We initially maintain the assumption from previous theoretical analyses of direct democracy \citep{besley2008issue, MT:04} that the government is compelled to implement whichever policy wins a majority in a referendum. %This may reflect constitutional fiat, or political convention. %\footnote{Countries with provisions for legally binding referendums include France, Georgia, Greece, Ireland, Russia and Switzerland.} 
%This is the assumption underlying existing theoretical studies of direct democracy (e.g., \citeauthor{besley2008issue},  \citeyear{besley2008issue}, \citeauthor{MT:04}, \citeyear{MT:04}).  

%Consider an election in which the difference between the fraction of policy-oriented voters that derive a higher utility from party $R$ and the fraction that derive a higher utility from party $L$ is $s$. Since $\mu$ is the fraction of policy-oriented voters and $\eta$ is the fraction of noise voters that vote for $R$, that party wins if and only if $\mu s+(1-\mu)\eta$ exceeds $(1-\mu)(1-\eta)$. Recalling that $\eta$ is uniformly drawn, Right therefore wins the election with probability $\frac{1}{2}+\frac{\mu}{2(1-\mu)}s$.

Recall that  a voter $i$ favors $y=1$ if and only if $b^i\ge0$. Since voters' net values for $y=1$ in party $J\in\{L,R\}$ are symmetrically distributed around $b_J+\gamma$, the fraction of voters in party $J$ who prefer $y=1$ is $1-B(-\gamma-b_J)=B(\gamma+b_J)$.   Thus, a majority of voters across the two parties favor $y=1$ if and only if
\begin{equation}
r B(\gamma+b_R)+(1-r) B(\gamma+b_J)\ge 1/2.\label{gammas}
\end{equation}
As the left hand side of \eqref{gammas} is continuous and strictly increases in $\gamma$ there is a unique threshold $\gamma^*(r,b_R,b_L)\in (-b_R,-b_L)$ such that a majority of voters favor $y=1$ if and only if $\gamma\ge \gamma^*$. So, if a referendum is held, alternative $y=1$ wins if and only if $\gamma$ exceeds $\gamma^*$.  

If a binding referendum is held, whichever party wins the election implements the majority-winning policy, and so parties converge on the emerging issue.  Of course, if $b_L < b_R<0$, a majority in both parties are expected to prefer $y=0$ initially, and so the parties converge absent a referendum as well.  
In either case, the subsequent election divides the parties solely on the traditional (i.e., liberal-conservative) cleavage. 

Among the share $\mu$ of policy voters, a fraction $r\in(0,1)$ are conservatives who prefer Right, and the remaining $1-r$ are liberals who prefer Left. Recalling that Right receives a uniform random share $\eta$ of the remaining $1-\mu$ noise voters, Right wins if and only if $\mu r+(1-\mu)\eta$ exceeds $\mu(1-r)+(1-\mu)(1-\eta)$, which occurs with probability

\begin{equation}
\lambda(r)\equiv \frac{1}{2}+\frac{\mu}{1-\mu}\left[r-\frac{1}{2}\right].\label{lambda}
\end{equation}
If $b_R<0$ then Right wins with probability $\lambda(r)$ whether or not a binding referendum is held.  

There is one case, however, in which a binding referendum changes election probabilities. When $b_R\ge 0>b_L$, a majority of liberals are anticipated to favor $y=0$, and a majority of conservatives to favor $y=1$. Absent a referendum, the parties' policies differ on both issues, leading to a multi-issue election. In that election, a conservative voter $i$ supports Right if and only if her combined net benefit $b^i$ from $y=1$ and a conservative traditional policy ($p$) is positive, i.e., she votes for Right if and only if $p+b^i \ge 0$. Similarly, a liberal voter $i$ supports Left if and only if she does not mis-align too strongly with Left's policy of $y=0$, i.e., if and only if $p \geq b^i$. The share of conservatives who support Right is thus $1-B(-p-\gamma-b_R)=B(p+\gamma+b_R)$ and the share of liberals who support Right is $1-B(p-\gamma-b_L)=B(\gamma+b_L-p)$.  So, Right's share of policy voters is: 
\begin{equation}
s(\gamma;r,b_R,b_L,p)\equiv r B(\gamma+b_R+p)+(1-r) B(\gamma+b_L-p).\label{eq:noref}
\end{equation}
%Expression (\ref{eq:noref})  highlights the classic bundling logic: liberal voters with $b^i\in[0,+p]$ vote for Left despite mis-alignment with their party on the second issue, and conservative voters with $b^i\in[-p,0]$ vote for Right despite mis-alignment with their party.   Added bundling reference above.
As this share of policy voters depends on the realization of $\gamma$, Right's probability of winning the election in the absence of a referendum is 
\begin{equation}
\int_{-\infty}^{\infty}\lambda(s(\gamma;r,b_R,b_L,p))g(\gamma)\,d\gamma, \label{eq:multi}
\end{equation}
where the function $\lambda(\cdot)$ is defined in expression \eqref{lambda} and $s(\cdot)$ is defined in \eqref{eq:noref}.

The effect of a binding referendum, which turns a multi-issue election into a single-issue one, on Right's probability of winning is then  
\begin{align}
D(r, b_L, b_R, p)&=\int_{-\infty}^{\infty}[\lambda(r)-\lambda(s(\gamma;r,b_R,b_L,p))]g(\gamma)d\gamma  \propto\int_{-\infty}^{\infty}(r-s(\gamma, r, b_L, b_R, p))g(\gamma)d\gamma. \label{D}
\end{align}
Right wants to initiate a referendum if and only if $D(r, b_L, b_R, p)>0$.  The next proposition characterizes when this is satisfied.

\begin{proposition}\label{pro:bench}
If the parties initially \textbf{align} on the second issue ($b_R<0$) a binding referendum has no effect on Right's election prospects. If the parties \textbf{misalign} on the second issue ($b_R\ge 0$), there exists a threshold $r_{\text{bind}}\in(0,1)$ such that Right strictly benefits from a binding referendum if and only if ${r}>r_{\text{bind}}$. Threshold $r_{\text{bind}}$:

\noindent(1) \emph{strictly increases} in $b_R$ and $b_L$ and equals $1/2$ when $b_R=-b_L$.

%\noindent(2) exceeds one half if and only if $b_R\ge -b_L$.

\noindent(2) \emph{strictly increases} in $p$ if $b_R > -b_L$ and \emph{strictly decreases} in $p$ if $b_R < -b_L$.

\end{proposition}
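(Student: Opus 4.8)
The plan is to work directly with the sign of $D$ as expressed in equation \eqref{D}, i.e., with the sign of $\Phi(r,b_L,b_R,p)\equiv\int_{-\infty}^{\infty}\bigl(r-s(\gamma;r,b_R,b_L,p)\bigr)g(\gamma)\,d\gamma$. The first observation is that when $b_R<0$ both parties converge on $y$ with or without a referendum, so $s(\gamma)\equiv r$ is irrelevant — the referendum leaves the single-issue election intact and $D\equiv 0$; this handles the ``align'' case. For $b_R\ge 0$ I would first establish monotonicity of $\Phi$ in $r$: substituting $s$ from \eqref{eq:noref}, $r-s = r\bigl(1-B(\gamma+b_R+p)\bigr)-(1-r)B(\gamma+b_L-p)$, which is affine and strictly increasing in $r$ with slope $1-B(\gamma+b_R+p)+B(\gamma+b_L-p)$. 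Since $B$ has full support this slope is strictly positive pointwise, hence $\Phi$ is strictly increasing in $r$; and at $r=0$, $\Phi=-\int B(\gamma+b_L-p)g\,d\gamma<0$ while at $r=1$, $\Phi=\int\bigl(1-B(\gamma+b_R+p)\bigr)g\,d\gamma>0$. Continuity then gives a unique root $r_{\text{bind}}\in(0,1)$ with $D>0\iff r>r_{\text{bind}}$.

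For the comparative statics I would use the implicit characterization $\Phi(r_{\text{bind}},b_L,b_R,p)=0$, so that $\partial r_{\text{bind}}/\partial\theta = -(\partial\Phi/\partial\theta)/(\partial\Phi/\partial r)$; since the denominator is positive, the sign of each comparative static is the sign of $-\partial\Phi/\partial\theta$ evaluated at the root. Differentiating under the integral: $\partial\Phi/\partial b_R = -r\int b(\gamma+b_R+p)g(\gamma)\,d\gamma<0$ and $\partial\Phi/\partial b_L = -(1-r)\int b(\gamma+b_L-p)g(\gamma)\,d\gamma<0$, so $r_{\text{bind}}$ strictly increases in both $b_R$ and $b_L$. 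For the value at $b_R=-b_L$: I expect to use the symmetry of $B$ together with the fact that $\gamma$ has a mean-zero symmetric density. Plugging $r=1/2$, $b_R=-b_L\equiv\beta$ into $r-s$ gives $\tfrac12\bigl(1-B(\gamma+\beta+p)\bigr)-\tfrac12 B(\gamma-\beta-p) = \tfrac12\bigl(B(-\gamma-\beta-p)-B(\gamma-\beta-p)\bigr)$ using $1-B(t)=B(-t)$; the integral of this against the symmetric $g$ vanishes because substituting $\gamma\mapsto-\gamma$ swaps the two terms. Hence $\Phi(1/2,-b_R,b_R,p)=0$, which by uniqueness of the root pins down $r_{\text{bind}}=1/2$ there.

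The last item — the sign of $\partial r_{\text{bind}}/\partial p$ depending on whether $b_R\gtrless-b_L$ — is the step I expect to be the main obstacle, because at a generic root $r_{\text{bind}}\ne 1/2$ and the two $B$-terms carry unequal weights, so a blunt symmetry argument won't close it. The sign we need is that of $-\partial\Phi/\partial p = r\int b(\gamma+b_R+p)g\,d\gamma - (1-r)\int b(\gamma+b_L-p)g\,d\gamma$, evaluated at $r=r_{\text{bind}}$. My plan is to evaluate this at the case $b_R=-b_L$ (where $r_{\text{bind}}=1/2$ exactly, so the two integrals coincide by symmetry and $\partial\Phi/\partial p=0$), and then argue that as $b_R$ moves away from $-b_L$ the root $r_{\text{bind}}$ moves in a direction — governed by part (1) — that tips $-\partial\Phi/\partial p$ to the required sign. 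Concretely, at $b_R>-b_L$ we have $r_{\text{bind}}>1/2$, so the defining equation $r(1-B)=(1-r)B$ under the integral forces $\int(1-B(\gamma+b_R+p))g < \int B(\gamma+b_L-p)g$, i.e. the Right-side mass function is shifted rightward; combined with logconcavity of $B$ (hence unimodality of $b$) and the single-peakedness of $g$, this lets one rank $r\int b(\gamma+b_R+p)g$ against $(1-r)\int b(\gamma+b_L-p)g$ — I would make this precise via a ratio/monotone-likelihood argument on $b(\gamma+b_R+p)/b(\gamma+b_L-p)$, using that this ratio is monotone in $\gamma$ by logconcavity. The mirror argument handles $b_R<-b_L$. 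If the clean MLR route proves stubborn, the fallback is to differentiate the whole map $p\mapsto r_{\text{bind}}(p)$ through the identity $\Phi=0$ and track the sign along the one-dimensional curve starting from the symmetric anchor point, which should require only the signs already established in parts (1) and (2) together with continuity.
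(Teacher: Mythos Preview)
Your treatment of the alignment case, the existence and uniqueness of $r_{\text{bind}}$, the monotonicity in $b_R$ and $b_L$, and the value $r_{\text{bind}}=1/2$ at $b_R=-b_L$ are all correct and match the paper's proof (the paper writes out an explicit closed form for $r_{\text{bind}}$ rather than invoking the implicit function theorem, but the content is identical).

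For part (2) you correctly identify this as the crux, and your ``fallback'' --- anchor at $b_R=-b_L$ where $\partial\Phi/\partial p=0$ and then differentiate in $b_R$ --- is exactly the paper's route. However, your primary MLR sketch has a genuine gap. Monotonicity of the ratio $b(\gamma+b_R+p)/b(\gamma+b_L-p)$ in $\gamma$ does not on its own let you rank $r_{\text{bind}}\int b(\gamma+b_R+p)g$ against $(1-r_{\text{bind}})\int b(\gamma+b_L-p)g$, because the only link to $r_{\text{bind}}$ is the \emph{averaged} identity $r_{\text{bind}}\int(1-B(\gamma+b_R+p))g=(1-r_{\text{bind}})\int B(\gamma+b_L-p)g$, not a pointwise one. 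If you chase your route to the end, the statement you need is that $\alpha\mapsto\int B(\gamma+\alpha)g(\gamma)\,d\gamma$ is log-concave, and that does not follow from log-concavity of $B$ alone when $g$ is only assumed quasi-concave. The technical ingredient the paper supplies --- and which you have not identified --- is the inequality
\[
\int_{-\infty}^{\infty} b'(-p-\gamma-b_R)\,g(\gamma)\,d\gamma\;\ge\;0,
\]
obtained by substituting $u=-p-b_R-\gamma$, pairing $u$ with $-u$, and using that $b'$ is odd together with the symmetry and single-peakedness of $g$ to sign the bracket $g(-p-b_R-u)-g(u-p-b_R)$ for $u\ge 0$. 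Combined with $\partial r_{\text{bind}}/\partial b_R>0$ from part (1), this is what the paper uses to sign the $b_R$-derivative of $\partial r_{\text{bind}}/\partial p$ away from the symmetric anchor. Note that log-concavity of $B$ plays no role in the paper's argument for part (2); the work is done entirely by symmetry and unimodality of $b$ and by quasi-concavity of $g$, so your instinct to reach for an MLR/log-concavity device is pointing at the wrong hypothesis.
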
 

\autoref{pro:bench} shows that, when the parties initially divide on the second issue, a binding referendum increases a party's prospects if its partisan base is sufficiently strong.  Because all voters support their own party in a single-issue election, but not in a multi-issue one, a stronger partisan base always increases the benefit of taking the second issue off the table.  That does not necessarily mean a referendum enhances the electoral prospects of the majority party, however, since the benefit also depends on which party is relatively more divided on the second issue.

If the parties divide on the second issue, the fraction of Right voters who support their party's position of $y=1$ is $1-B(-\gamma-b_R)=B(\gamma+b_R)$ and the fraction of Left voters who support their party's position of $y=0$ is $B(-\gamma-b_L)$. So Right's base is more \emph{cohesive} than Left's if and only if:
\[
B(\gamma+b_R)\ge B(-\gamma-b_L)\iff \gamma\ge-\frac{b_R+b_L}{2}.
\]
When $b_R=-b_L$, Right is more cohesive than Left if $\gamma\ge 0$, and Left is more cohesive than Right if $\gamma\le 0$. Since $\gamma$ is symmetrically distributed around zero, each scenario is equally likely, and it is the majority party who benefits from a referendum. An increase in $b_R>0$ makes the Right party more cohesive for any $\gamma$ as more of its voters support $y=1$, and an increase in $b_L<0$ makes Left less cohesive as fewer of its supporters prefer $y=0$.  Thus, when $b_R>-b_L$ ($b_R<-b_L$), Right is expected to be more (respectively, less) cohesive in a multi-issue election than Left.

If $r=1/2$ and $b_R>-b_L$, Right is thus advantaged in a multi-issue election but not in a single-issue election, and so would prefer \emph{not} to take the second issue off the table.  As such, the cutoff for Right to initiate a referendum, $r_{\text{bind}}$, must exceed one half when $b_R>-b_L$.  Increases in $b_R$ and $b_L$ both increase Right's relative cohesion on the second issue, reducing the party's incentive to use a referendum to take that issue off the table (i.e., raising threshold $r_{\text{bind}}$). %  of notice that the fraction of conservatives that prefer $y=1$ is $1-B(-\gamma-b_R)$. This is the share of Right's base that aligns with the party's majority-preferred policy of $y=1$ when $b_R+\gamma>0$. Larger $b_R+\gamma>0$ implies that conservatives are increasingly unified on the second issue. This cohesion strengthens the party in a multi-issue election, because a smaller minority of its core supporters are drawn to Left's policy of $y=0$ on the second issue. Similarly, the fraction of liberals that favor $y=0$ is $B(-\gamma-b_L)$. As $b_L+\gamma<0$ further decreases, Left's majority faction in favor of $y=0$ becomes increasingly dominant in the party, yielding a smaller minority with preference $y^i>0$ that favor Right's policy on the second issue.

%The share of Right's core supporters that align with Right's policy exceeds the share of Left's core supporters that align with Left's policy if and only if
%\[
%1-B(-\gamma-b_R)>B(-\gamma-b_L)\iff \gamma>-\frac{b_R+b_L}{2}.
%\]
%In this case, Right's base is more \emph{cohesive} than Left's. The probability that Right's base is more cohesive than Left's exceeds one half if and only if
%\[
%1-G\left(-\frac{b_R+b_L}{2}\right)>\frac{1}{2}\iff b_R>-b_L.
%\]
%When this condition holds, Right derives a relative advantage from a multi-issue election: the party anticipates that its own partisan net losses due to mis-alignment on the second issue among its core supporters will be superseded by Left's own internal divisions. As such, $r_{\text{bind}}>1/2$, i.e., Right may prefer \emph{not} to take the second issue off the table, even when it holds a partisan advantage on the traditional issue.  An increase in $b_R>0$ improves Right's cohesion, and an increase in $b_L<0$ reduces Left's cohesion. Either shift \emph{raises} threshold $r_{\text{bind}}$, discouraging Right from holding a referendum. 

Finally, consider the comparative statics on partisan polarization.  When $b_R>-b_{L}$, so Right is the more united party, then $r_{\text{bind}}>1/2$ and they are indifferent over holding a referendum only when in the majority.  Part (2) of the proposition states that if inter-party polarization $p$ increases, then so does $r_{\text{bind}}$, meaning that an even larger majority is needed for Right to benefit from a referendum.  This is because polarization $p$ is irrelevant in a single-issue election: all policy voters always stick with their party.  In a multi-issue election, however, higher polarization $p$ increases the importance of the traditional issue, reducing cross-over voting and benefitting the larger party.  A referendum that takes the second issue off the table, turning a multi-issue election into a single issue one, thus gives a greater benefit to the small and divided party than to the large and united one.

\noindent\emph{To Summarize:} Binding referendums are only initiated when the parties initially divide on the second issue.  In this case they are are favored by large but divided parties, who have the most to gain by taking the second issue off the table.  

\section*{Non-Binding Referendums}

\begin{quote}
"\emph{In a 52-48 referendum this would be unfinished business by a long way. If the Remain campaign win two-thirds to one-third that ends it.}"
\end{quote}
\begin{flushright}
UK Independence Party (UKIP) Leader Nigel Farage, May 2016.\footnote{``Nigel Farage wants second referendum if Remain campaign scrapes narrow win'', \emph{Daily Mirror}, 16 May 2016.}
\end{flushright}
%\newpage
\medskip

\begin{quote}
"\emph{...the greatest democratic mandate ever seen.}"
\end{quote}
\begin{flushright}
 Farage on the \emph{Leave} campaign's 51.89-48.11 win, January 2020.\footnote{https://inews.co.uk/news/nigel-farage-brexit-parliament-square-speech-celebrations-392984}
\end{flushright}

We argued earlier that in most real-world contexts, policymakers are  imperfectly bound by direct democratic mandates.  Even if formally binding, policymakers can often find ways to circumvent them \citep{gerber2004direct}. How does a lack of commitment to honor direct democratic outcomes impact the electoral consequences of a referendum?  To address this question, we modify our benchmark model by assuming that each party implements its core supporters' majority-preferred policy both before \emph{and} after any referendum.\footnote{An alternative approach would be to assume that the party faces an exogenous cost of diverging from the winning policy, increasing in the difference between the size of the majority and one half, reflecting that larger public mandates are more costly to ignore. So long as this cost is not too large relative to differences in average preferences ($b_R-b_L$),  the parties will still diverge for some shifts in preferences ($\gamma$); all results extend, with the thresholds appropriately amended. This approach can span from the `binding' benchmark, in which costs of reneging are infinite, to `non-binding' in which these costs are zero.} 

If no referendum is held, the analysis is exactly as in the binding case, with Right's probability of winning given by \eqref{lambda} when $b_R<0$ and \eqref{eq:multi} when $b_R \geq 0$.  

What are the consequences of holding a non-binding referendum? Recall that the fraction of voters who favor $y=1$ is given by the left-hand side of expression \eqref{gammas}. This can be inverted to obtain $\gamma$'s realization for any vote share.  In other words: the referendum reveals the uncertain component of voters' preferences, captured by $\gamma$, shaping the subsequent electoral contest. Notice that in our framework non-binding referendums simply reveal voter preferences.  Thus, any harmful consequences of direct democracy can only emerge in our framework due to politics that transpire \emph{after} the referendum.  

\autoref{fig:0} shows the majority-preferred position among the electorate as a whole, as well as in each party, as a function of $\gamma$'s realization. The median policy voter in the electorate  prefers $y=1$ if and only if $\gamma$ exceeds the threshold $\gamma^*(r,b_R,b_L)$, defined in expression \eqref{gammas}. This threshold lies between $-b_R$ and $-b_L$, as highlighted in the figure. However, party $J$'s median core supporter prefers $y=1$ if and only if $\gamma\ge -b_J$. The referendum therefore yields one of three outcomes.

\begin{figure}[t!]
\centering
  \includegraphics[width=.9\linewidth]{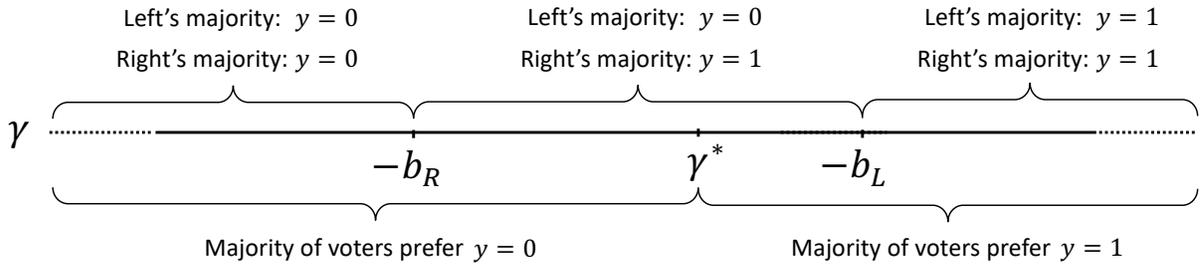}
\caption{How a non-binding referendum leads to either a single- or multi-issue election.}
\label{fig:0}
\end{figure}

(1) If $\gamma$ is sufficiently high, $\gamma\ge -b_L$, a referendum reveals that both parties' majorities favor $y=1$.  So the parties converge on $y=1$.

(2) If $\gamma$ is sufficiently low, $\gamma<-b_R$, a referendum reveals that both parties' majorities favor $y=0$.  So the parties converge on $y=0$.

(3) For intermediate $\gamma\in[-b_R,-b_L)$, $R$'s majority favors $y=1$ but $L$'s majority favors $y=0$.  So the parties diverge on both the partisan \emph{and} the second issue. 

Notice that in the first two cases, the referendum yields policy convergence on the second issue---just as if the parties were formally bound by its outcome. The ensuing election is fought solely on the traditional issue.  In the third case, the ensuing election is a multi-issue contest.  Since $\gamma^*\in(-b_R,-b_L)$, whenever a referendum outcome is sufficiently close to a fifty-fifty split, the parties always subsequently diverge.  Thus a close referendum ensures that the second issue will play a role in the ensuing election.  

\begin{remark}\label{r:close}
A sufficiently close referendum result \textbf{always} precipitates a multi-issue election.
\end{remark}

A single-issue election arises when the parties initially align on the second issue and no referendum is held, \emph{or} after a referendum that aligns the parties' majorities: $\gamma\ge -b_L$ or $\gamma<-b_R$. In either event, Right wins the election with probability $\lambda(r)$, defined in \eqref{lambda}.

A multi-issue election instead arises if (i) the parties initially mis-align on the second issue ($b_L<0\le b_R$) and no referendum takes place, or (ii) a referendum mis-aligns the parties' majorities by revealing $\gamma\in[-b_R,-b_L)$.  In these cases Right wins with probability $\lambda(s(\gamma;r,b_R,b_L,p))$, where $s(\cdot)$ is defined in \eqref{eq:noref}.  Since the realization of $\gamma$ is initially unknown, the expected winning probability involves integrating over possible $\gamma$ realizations.

How do these possibilities affect $R$'s electoral benefits from a referendum? The answer depends on whether the issue already divides the parties in the absence of a referendum.

\noindent\emph{Parties initially align on second issue.} Suppose, first, that $b_R<0$. A referendum has no impact on the election if it reveals $\gamma<-b_R$ or $\gamma\ge -b_L$, since the parties' policies on the second issue converge either way. When it reveals $\gamma\in[-b_R,-b_L)$, however, it divides the parties on the second issue and bundles it with the partisan one in the subsequent election. Right's net benefit from a referendum is 
\begin{align}
\bar{D}(r, b_L, b_R, p)&=\int_{-b_R}^{-b_L}(\lambda(s(\gamma;r,b_R,b_L,p))-\lambda(r))g(\gamma)\,d\gamma\propto \int_{-b_R}^{-b_L}(s(\gamma;r,b_R,b_L,p)-r)g(\gamma)\,d\gamma,  \label{eq:bn}
\end{align}
so Right will initiate a referendum if and only if $\bar{D}(r, b_L, b_R, p)>0$.

Because a binding referendum can only settle a conflict on the second issue, it has no electoral consequences when the parties initially align on that issue. By contrast, a non-binding referendum can only \emph{create} division on the second issue when the parties initially align. The trade-off between a single- and multi-issue election depends on the parties' relative size, and also their anticipated relative cohesion. The next result shows how this trade-off resolves.

\begin{proposition}\label{prop:bn}
When $b_L<b_R<0$, there exists a threshold $r^* \in \left(0, \frac{1}{2}\right)$ such that Right benefits from a referendum if and only if ${r}<r^*$. Threshold $r^*$ \emph{strictly decreases} in $b_R$, \emph{strictly increases} in $b_L$, and \emph{strictly decreases} in $p$. 
\end{proposition}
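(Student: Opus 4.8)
\emph{Plan.} By \eqref{eq:bn}, $\bar{D}$ is proportional, with a positive constant, to $H(r)\equiv\int_{-b_R}^{-b_L}\big(s(\gamma;r,b_R,b_L,p)-r\big)g(\gamma)\,d\gamma$, so the whole proposition is about the sign of $H$. I would (i) show $H$ is continuous and strictly decreasing in $r$, (ii) compute its sign at $r=0$ and $r=\tfrac12$, (iii) conclude existence of a unique $r^{*}\in(0,\tfrac12)$ with $\bar{D}>0\iff r<r^{*}$ by the intermediate value theorem, and (iv) obtain the comparative statics by differentiating the defining identity $H(r^{*})=0$.

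\emph{Monotonicity and endpoints.} Since $s(\gamma;r,b_R,b_L,p)=rB(\gamma+b_R+p)+(1-r)B(\gamma+b_L-p)$ and $\gamma+b_R+p>\gamma+b_L-p$, we get $\partial(s-r)/\partial r=B(\gamma+b_R+p)-B(\gamma+b_L-p)-1<0$; because the limits of integration are free of $r$, $H$ is strictly decreasing in $r$. At $r=0$, $s-r=B(\gamma+b_L-p)>0$ for every $\gamma$ by full support of $B$, so $H(0)>0$, forcing $r^{*}>0$. At $r=\tfrac12$, symmetry of $B$ ($1-B(z)=B(-z)$) gives $s(\gamma)-\tfrac12=\tfrac12\big[B(\gamma+b_R+p)-B(p-b_L-\gamma)\big]$, which vanishes at $m\equiv-(b_L+b_R)/2$ --- the midpoint of $[-b_R,-b_L)$ --- is anti-symmetric about $m$, and is positive to the right of $m$. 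Since $b_L<b_R<0$ the interval lies in $(0,\infty)$, where the symmetric, strictly quasi-concave density $g$ is strictly decreasing; pairing $\gamma=m\pm t$ turns $H(\tfrac12)$ into $\int_{0}^{(b_R-b_L)/2}\big(s(m+t)-\tfrac12\big)\big[g(m+t)-g(m-t)\big]\,dt<0$. With strict monotonicity and continuity this yields a unique $r^{*}\in(0,\tfrac12)$ and the stated sign pattern.

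\emph{Comparative statics.} Applying the implicit function theorem to $H(r^{*}(b_L,b_R,p);b_L,b_R,p)=0$ with $\partial H/\partial r<0$, the sign of $\partial r^{*}/\partial\theta$ coincides with that of $\partial H/\partial\theta$ at $r=r^{*}$, for $\theta\in\{b_R,b_L,p\}$. Since $p$ leaves the limits fixed, $\partial H/\partial p=\int_{-b_R}^{-b_L}\big[rb(\gamma+b_R+p)-(1-r)b(\gamma+b_L-p)\big]g(\gamma)\,d\gamma$; differentiating in $b_R$ and $b_L$ adds Leibniz boundary terms $(s(-b_R)-r^{*})g(-b_R)$ and $-(s(-b_L)-r^{*})g(-b_L)$, both \emph{negative}, because $\gamma\mapsto s(\gamma)-r^{*}$ is increasing and, its $g$-weighted integral being zero, it is strictly negative at the left endpoint and strictly positive at the right endpoint of $[-b_R,-b_L)$. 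I would then sign each total derivative by a single-crossing argument, pairing $\gamma=m\pm t$ and using: the anti-symmetry/monotonicity structure of the $r=\tfrac12$ step, symmetry of $B$, log-concavity of $b$ (so that $b(\gamma+b_R+p)/b(\gamma+b_L-p)$ is monotone in $\gamma$), the fact that $r^{*}<\tfrac12$, and --- crucially --- the identity $\int_{-b_R}^{-b_L}(s(\gamma)-r^{*})g(\gamma)\,d\gamma=0$, which is what allows the negative boundary term to be traded against the interior integral.

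\emph{Main obstacle.} The comparative-statics step is the hard part. None of $\partial H/\partial p$, $\partial H/\partial b_R$, $\partial H/\partial b_L$ has an integrand (nor, for $b_R$ and $b_L$, a collection of terms) whose sign is fixed pointwise, so the inequalities cannot be read off term by term: for example, in $\partial H/\partial b_R=\psi(-b_R)+\int r b(\gamma+b_R+p)g\,d\gamma$ one must show the negative boundary piece $\psi(-b_R)$ outweighs the positive interior piece, while for $b_L$ the positive interior piece must outweigh the negative boundary piece --- opposite directions. Getting both to fall out cleanly requires exploiting $r^{*}<\tfrac12$ jointly with $H(r^{*})=0$ and the shape restrictions on $B$, $b$, and $g$; organizing this bookkeeping is where essentially all the effort lies, whereas the existence/uniqueness steps are routine.
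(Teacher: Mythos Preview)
Your existence and uniqueness argument is correct and coincides with the paper's: both show that $H$ is strictly decreasing in $r$, strictly positive at $r=0$, and non-positive at $r=\tfrac12$. Your midpoint-pairing argument for $H(\tfrac12)<0$ is essentially the elementary proof of the Chebyshev integral inequality the paper invokes, and it has the pleasant side effect of delivering the strict inequality $r^{*}<\tfrac12$ directly (the paper only recovers strictness later, as a by-product of the $b_R$ comparative static).

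The substantive gap is in the $b_R$ and $b_L$ comparative statics. You go the natural implicit-function route and correctly diagnose the obstacle: the Leibniz boundary term and the interior integral have opposite signs, and the required domination runs one way for $b_R$ and the opposite way for $b_L$. You do not resolve this, and it is not clear your proposed tools would: the anti-symmetry about the midpoint $m$ that drives the $r=\tfrac12$ step disappears once $r=r^{*}<\tfrac12$, so the pairing no longer cancels cleanly, and the single scalar identity $H(r^{*})=0$ is a thin constraint to trade against two competing terms. The paper does \emph{not} attempt this comparison. Instead it writes
\[
r^{*}=\frac{1}{1+\tau(b_L,b_R)},\qquad \tau(b_L,b_R)=\frac{\int_{-b_R}^{-b_L}B(-p-\gamma-b_R)\,g(\gamma)\,d\gamma}{\int_{-b_R}^{-b_L}B(-p+\gamma+b_L)\,g(\gamma)\,d\gamma},
\]
observes that $\tau\to1$ as $b_R\to b_L$ while $\tau\ge1$ throughout (so $\tau$ is initially increasing in $b_R$), and then shows $\tau$ is \emph{quasi-convex} in $b_R$: at any critical point the sign-determining quantity $X(b_R)$ satisfies $X'(b_R)>0$. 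That step uses integration by parts to convert the boundary pieces into integrals against $g'$, and then log-concavity of $B$---i.e., monotonicity of $B'/B$, not log-concavity of $b$ as you write---to sign the remainder. The same scheme, with roles reversed, handles $b_L$. This ``initially monotone plus quasi-convex'' device is the idea your outline is missing; the direct IFT decomposition never gets organized into a quantity whose second-order behavior can be signed.

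For the $p$ comparative static the paper's argument is also more elementary than your sketch: since $\partial_p H$ is linear in $r$, one bounds it at $r=\tfrac12$ (again via Chebyshev/pairing) and concludes via $r^{*}<\tfrac12$; no log-concavity and no use of $H(r^{*})=0$ is needed.
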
 

\autoref{prop:bn} establishes that when parties initially align on the second issue, Right never initiates a referendum when in the majority. Surprisingly, it may not do so even when in the minority.

To see why, consider \autoref{fig:1}. The horizontal axis shows preference shock $\gamma$'s realization, the bell-shaped black curve is $\gamma$'s density, and the red line is Right's share of policy voters in a multi-issue election defined in \eqref{eq:noref}. This share strictly increases in $\gamma$, since Right's policy of $y=1$ aligns with a larger share of voters. The bold segments of the lines correspond to preference shock realizations $\gamma\in[-b_R,-b_L)$ that trigger a multi-issue election after a non-binding referendum: outside this interval, the referendum has no electoral consequence. The figure considers $r=1/2$,  so Right is more likely to win a multi-issue election than a single-issue election whenever the red line is above one half.

A referendum mis-aligns the parties on the second issue only when $\gamma\in[-b_R,-b_L)$. In that event, a fraction  $1-B(-b_R-\gamma)=B(b_R+\gamma)$ of Right's supporters prefer $y=1$ and a fraction $B(-b_L-\gamma)$ of Left's prefer $y=0$. This means that Right is more divided if $\gamma<-\frac{b_R+b_L}{2}$ and more united if $\gamma>-\frac{b_R+b_L}{2}$. If the $\gamma$'s density $g(\gamma)$ were constant, both cases would be equally likely.  But $g(\cdot)$ is strictly quasi-concave and peaked at zero, and thus strictly decreases on $[-b_R,-b_L]$, as highlighted by the decreasing bold black line in \autoref{fig:1}. As a result, $\gamma<-\frac{b_R+b_L}{2}$ is more likely than $\gamma>-\frac{b_R+b_L}{2}$. This means that Right expects to be less cohesive than Left---and thus relatively disadvantaged---in the event of a multi-issue election. This yields $r^*< 1/2$ and Right only initiates a referendum if strictly in the minority.%\footnote{We assumed $g(\cdot)$ is strictly quasi-concave and so strictly decreases on $[-b_R,-b_L]$.  If instead we had allowed $g(\cdot)$ to be constant on $[-b_R,-b_L]$ then each party would be equally likely to be more cohesive and  $r^*=1/2$.}
%RV2:  Tweaked this discussion given $r^{*}<1/2$ now with strict q-concavity.  Also removed footnote, putting the discussion of g non-constant earlier in paragraph, doesn't seem to require a FN.

\begin{figure}[t!]
\centering
  \includegraphics[width=.7\linewidth]{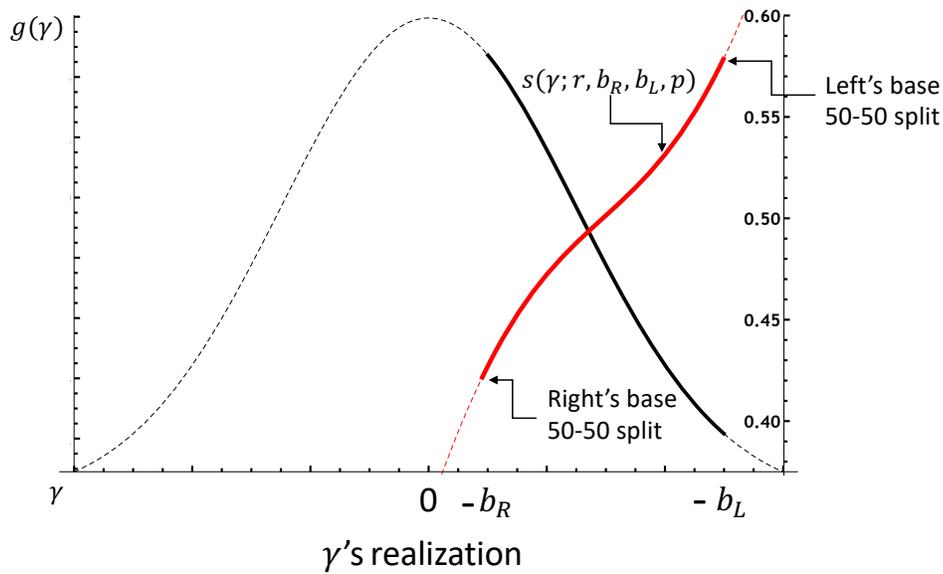}
\caption{Right's support after a referendum mis-aligns the parties' majorities on the emerging issue. The horizontal axis identifies $\gamma$'s realization; the left vertical axis is $g(\gamma)$ (\emph{black line}), and the right vertical axis is Right's share of policy voters after a referendum mis-aligns the parties' policies (\emph{red line}), $s(\gamma; r,b_R,b_L,p)$ defined in \eqref{eq:noref}. Here, $b_R=-.1$ and $b_L=-.5$. The thick lines correspond to realizations $\gamma\in[-b_R,-b_L]$, i.e., $\gamma\in [.1,.5]$ for which a referendum divides the parties on both issues. Primitives: $r=.5$, $p=.2$, for each $J\in\{L,R\}$, $b_i\sim N(b_i-b_J-\gamma,.2)$, $\gamma\sim N(0,.25)$.}
\label{fig:1}
\end{figure}

%Figure \ref{fig:1} plots realizations of $\gamma\in[-b_R,-b_L]$ that divide the parties' majorities. The black line denotes $\gamma$'s density $g(\gamma)$, and the red line denotes Right's share of support among policy voters, given by expression (\ref{eq:noref}).  This share increases in $\gamma$.

%Since $g(\gamma)$ decreases in $\gamma\ge 0$---as highlighted in the figure---shocks close to $-b_R>0$ are relatively more likely than shocks close to $-b_L>-b_R$.\footnote{Our assumptions do not require that $g(\gamma)$ strictly decrease on $[-b_R,-b_L]$; for example, if $g(\gamma)$ is constant on this interval, then $r^*=\frac{1}{2}$.} In other words: shocks that revert Right's majority-preferred policy on the second issue are unlikely, but shocks that also switch a sizable share of Left's base to put Left in greater jeopardy than Right are \emph{even less} likely. Right therefore overwhelmingly anticipates that when a referendum divides the parties, it also reveals starker divisions in its own base than in its opponent's. Given that Right is in the minority, it needs to retain its own core supporters \emph{and} steal a sizable share of its opponents'. Since a referendum is likely to yield the opposite outcome, Right prefers to gamble only if its prospects in a single-issue contest are sufficiently desperate.
To understand the comparative statics in \autoref{prop:bn}, we consider %part (1) of \autoref{prop:bn} on 
how threshold $r^*$ changes as preferences shift inside either party. For concreteness, suppose $b_R$ increases to $\tilde{b}_R\in(b_R,0)$ and so a larger minority of Right's core supporters initially prefer $y=1$ (the effect of an increase in $b_L$ is the reverse). This has two effects, illustrated in \autoref{fig:2}. That figure again considers $r=1/2$, so that the parties win with probability one half in a single-issue election. %The (thick) red line shows Right's support in a multi-issue election for $\gamma\in[-b_R,-b_L]$, and the blue line Right's support after a shift in preferences conditional on a multi-issue election.

The first effect of an increase from $b_R$ to $\tilde{b}_R$ is an improvement in Right's cohesion at all preference shock realizations in $[-{b}_R,-b_L)$. The reason is that, in a multi-issue election, higher $b_R$ means that a larger share of Right's base aligns with the party's majority in favor of $y=1$. Right therefore loses fewer of its core supporters in a multi-issue election. This improves Right's electoral prospects, reflected by an upward shift from the %thick 
red line in \autoref{fig:2} to the blue line on the interval $[-{b}_R,-b_L]$. %This shift is proportional to $r$, the size of Right's partisan base. It further encourages Right to gamble with a referendum.

\begin{figure}[t!]
\centering
  \includegraphics[width=.7\linewidth]{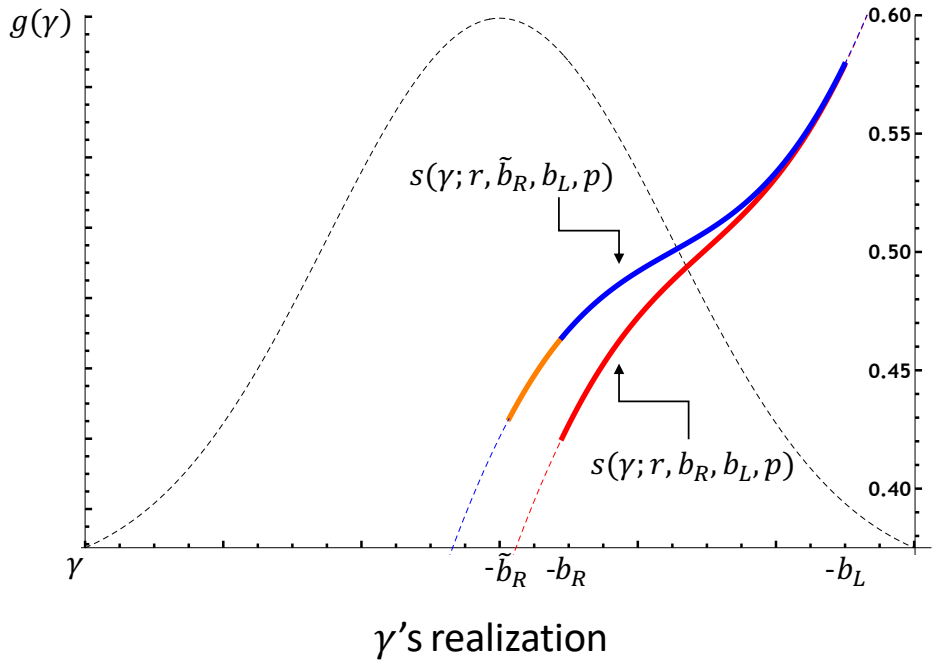}
\caption{How shifts in Right's cohesion change its incentive to hold a referendum. The horizontal axis identifies $\gamma$'s realization; the left vertical axis is $g(\gamma)$ (\emph{black line}), and the right vertical axis is Right's share of policy voters after a referendum mis-aligns the parties' policies, $s(\gamma; r,b_R,b_L,p)$ defined in \eqref{eq:noref}. The red line corresponds to $s(\gamma; r,-.1,b_L,p)$, and the blue/orange line corresponds to a shift to $s(\gamma; r,-.01,b_L,p)$. Primitives: $r=.5$, $p=.2$, for each $J\in\{L,R\}$, $b_i\sim N(b_i-b_J-\gamma,.2)$, $\gamma\sim N(0,.25)$.}
\label{fig:2}
\end{figure}

The second effect is to lower the threshold shock realization above which a referendum triggers a divided election from $\gamma=-b_R$ to $\gamma=-\tilde{b}_R$: if $\gamma \in [-\tilde{b}_R,-b_R]$ a referendum would not have triggered a multi-issue election, prior to $b_R$'s increase, and each party would win with probability one half. After the increase, however, these preference shocks lead to a multi-issue election. In \autoref{fig:2}, this is captured by the orange line $[-\tilde{b}_R,-b_R]$. Note that when $\gamma \in [-\tilde{b}_R,-b_R]$, Right is almost perfectly divided on the second dimension, and so disadvantaged by shifting from a single- to a multi-issue election. 

If density $g(\cdot)$ were constant the two effects would offset one another, but since $g(\gamma)$ strictly decreases on the interval the shocks that divide Right in a multi-issue election are relatively more likely.  Consequently the second (adverse) effect on Right's probability of winning dominates. Thus, Right's probability of winning the election after a referendum falls, and so $r^*$ \emph{strictly decreases} as $b_R$ increases.  

Finally, consider the effect of an increase in polarization.  Higher $p$ protects both parties' bases in a multi-issue election, making it more difficult for the minority party to steal its stronger opponent's core supporters. This reduces minority Right's benefit from a referendum, and so more intense inter-party polarization $p$ \emph{strictly lowers} threshold $r^*$.

In our office-motivated framework, Left's benefit from direct democracy is opposite to Right's. So, \autoref{prop:bn} implies that Left benefits from direct democracy if and only if $r>r^*$. Thus Left, the more cohesive party, may prefer to initiate a referendum even if in the majority, and is more likely to benefit from one when $b_R$ increases, so that Right's base  becomes more divided.

We turn to contexts in which the second issue already mis-aligns the parties' majorities.

\noindent\emph{Parties initially mis-align on second issue.} When $b_R\ge 0$ the parties initially face a multi-issue election. A referendum resolves the conflict if and only if it reveals a large enough shift in preferences to align the parties---either $\gamma<-b_R$ or $\gamma>-b_L$. Thus, the difference in Right's probability of winning the election after a referendum is 
\begin{align}
\tilde{D}(r, b_L, b_R, p)=\int_{-\infty}^{-b_R}(\lambda(r)-\lambda(s(\gamma;r,b_R,b_L,p))g(\gamma)\,d\gamma+\int_{-b_L}^{\infty}(\lambda(r)-\lambda(s(\gamma;r,b_R,b_L,p))g(\gamma)\,d\gamma,
\label{obj:BR}
\end{align}
and Right benefits from a referendum if and only if $\tilde{D}(r, b_L, b_R, p)>0$.

Recall that \autoref{pro:bench} shows that Right benefits from a binding referendum if and only if $r>r_{\text{bind}}$.  The next result states that a party prefers to use a non-binding referendum if and only if its partisan base is large enough. Paradoxically, a divided party can be even \emph{more} prone to use a non-binding referendum, even though it may fail to resolve the second issue.
 
\begin{proposition}\label{prop:bp}
If $b_R>0$, there exists $r^{**} \in (0, 1)$ such that Right benefits from a referendum if and only if ${r}>r^{**}$. Further, there exists $b_R^{\ddagger}<-b_L<b_R^{\dagger}$ such that $r^{**}<r_{\text{bind}}$ if $b_R^{\ddagger}<b_R<-b_L$, and  $r^{**}>r_{\text{bind}}$ if $-b_L<b_R<b_R^{\dagger}$.
\end{proposition}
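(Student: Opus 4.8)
\textbf{Step 1: the threshold $r^{**}$.} Since $\lambda$ is affine with positive slope, $\lambda(r)-\lambda(s)=\tfrac{\mu}{1-\mu}(r-s)$, so $\tilde D$ in \eqref{obj:BR} has the sign of
\[
J(r):=\int_{-\infty}^{-b_R}\!\big(r-s(\gamma;r,b_R)\big)g(\gamma)\,d\gamma+\int_{-b_L}^{\infty}\!\big(r-s(\gamma;r,b_R)\big)g(\gamma)\,d\gamma
\]
(suppressing the fixed arguments $b_L,p$ of $s$). From \eqref{eq:noref}, $\partial_r(r-s)=1-B(\gamma+b_R+p)+B(\gamma+b_L-p)>0$ for every $\gamma$, because $B<1$ on full support; hence $J$ is strictly increasing in $r$, the integrand is everywhere negative at $r=0$ and everywhere positive at $r=1$, so $J(0)<0<J(1)$. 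Continuity and strict monotonicity then give a unique root $r^{**}\in(0,1)$ with $\tilde D>0\iff r>r^{**}$.

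\textbf{Step 2: reduce the comparison with $r_{\text{bind}}$ to a sign.} I would write $J(r)=J_D(r)-N(r,b_R)$, where $J_D(r):=\int_{\mathbb R}(r-s(\gamma;r,b_R))g\,d\gamma$ governs the binding case (by \autoref{pro:bench}, $D\propto J_D$, and, exactly as in Step 1, $J_D$ is strictly increasing with unique root $r_{\text{bind}}$) and $N(r,b_R):=\int_{-b_R}^{-b_L}(r-s(\gamma;r,b_R))g\,d\gamma$. Evaluating at $r=r_{\text{bind}}$ gives $J(r_{\text{bind}})=-N(r_{\text{bind}},b_R)$, so strict monotonicity of $J$ yields
\[
r^{**}<r_{\text{bind}}\iff N(r_{\text{bind}},b_R)<0,\qquad r^{**}>r_{\text{bind}}\iff N(r_{\text{bind}},b_R)>0 .
\]
Because $\int_{\mathbb R}g=1$, the identity $J_D(r_{\text{bind}})=0$ reads $r_{\text{bind}}=\int_{\mathbb R}s(\gamma;r_{\text{bind}},b_R)g\,d\gamma$, so writing $\Phi(b_R):=N(r_{\text{bind}}(b_R),b_R)$ one obtains
\[
\Phi(b_R)=\big(G(-b_L)-G(-b_R)\big)\big(r_{\text{bind}}(b_R)-\bar{s}_{\mathrm{mid}}(b_R)\big),
\]
where $\bar{s}_{\mathrm{mid}}(b_R)$ is the $g$-weighted mean of $s(\,\cdot\,;r_{\text{bind}}(b_R),b_R)$ over $\gamma\in[-b_R,-b_L]$. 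The prefactor is strictly positive, so everything reduces to the sign of $r_{\text{bind}}-\bar{s}_{\mathrm{mid}}$ as $b_R$ varies, with $b_L,p$ fixed.

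\textbf{Step 3: anchor at $b_R=-b_L$ and differentiate.} By part (1) of \autoref{pro:bench}, $r_{\text{bind}}(-b_L)=\tfrac12$. There $[-b_R,-b_L]=[b_L,-b_L]$ is symmetric about $0$ and, with $c:=p-b_L>0$, $s(\gamma;\tfrac12,-b_L)=\tfrac12\big[B(\gamma+c)+B(\gamma-c)\big]$, so $\tfrac12-s$ is \emph{odd} in $\gamma$ (using $B(-x)=1-B(x)$); since $g$ is even, $\int_{b_L}^{-b_L}(\tfrac12-s)g=0$, i.e.\ $\Phi(-b_L)=0$ and $r^{**}=r_{\text{bind}}=\tfrac12$ there. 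Differentiating and using $\Phi(-b_L)=0$, $\sign\Phi'(-b_L)=\sign\big(r_{\text{bind}}'(-b_L)-\bar{s}_{\mathrm{mid}}'(-b_L)\big)$; I would get $r_{\text{bind}}'(-b_L)>0$ by implicit differentiation of $J_D(r_{\text{bind}},b_R)=0$ (consistent with part (1) of \autoref{pro:bench}) and $\bar{s}_{\mathrm{mid}}'(-b_L)$ by Leibniz differentiation, which produces a strictly positive boundary term at $\gamma=b_L$ together with interior terms. The claim to verify is $\Phi'(-b_L)>0$; granting it, continuity of $\Phi$ with $\Phi(-b_L)=0$ gives $\delta>0$ with $\Phi<0$ on $(-b_L-\delta,-b_L)$ and $\Phi>0$ on $(-b_L,-b_L+\delta)$, and then $b_R^{\dagger}:=-b_L+\delta$, $b_R^{\ddagger}:=\max\{0,-b_L-\delta\}$ together with the equivalence in Step 2 deliver exactly the stated conclusion.

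\textbf{Main obstacle.} The hard point is the sign $\Phi'(-b_L)>0$: expanded out, $\Phi'(-b_L)$ equals a strictly positive boundary term plus a positive multiple of $r_{\text{bind}}'(-b_L)$ minus $\tfrac12\int_{b_L}^{-b_L}b(\gamma+c)g(\gamma)\,d\gamma$, so one must show the first two terms dominate the last. This is where I expect the distributional shape assumptions — logconcavity of $B$ and the symmetry and strict quasi-concavity of $g$ — to do the real work, since it is precisely the concentration of $g$ near $0$ relative to its tails that makes the middle interval $[-b_R,-b_L]$ atypical and pins down the direction in which $\Phi$ crosses zero at $b_R=-b_L$.
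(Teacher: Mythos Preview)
Your Steps 1--2 and the anchor $\Phi(-b_L)=0$ via the odd-function argument are exactly the paper's route; the reduction to the sign of $N(r_{\text{bind}},b_R)$ is identical. The gap is only at your ``main obstacle'', and the paper's device for it is integration by parts. Your positive boundary term $\tfrac12\bigl(B(-p)-B(2b_L-p)\bigr)g(b_L)$ and your negative interior term $-\tfrac12\int_{b_L}^{-b_L}b(\gamma+c)\,g(\gamma)\,d\gamma$ collapse, after one IBP and using $g(b_L)=g(-b_L)$, into the single expression
\[
\partial_{b_R}N\big|_{r=1/2,\,b_R=-b_L}\;=\;-\tfrac12\int_{b_L}^{-b_L}B(-p-\gamma+b_L)\,g'(\gamma)\,d\gamma,
\]
so the three-term comparison you flagged becomes the sign of a single integral in which only $g'$ appears. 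The paper then splits this integral at $\gamma=0$ and argues the sign using only the symmetry and strict quasi-concavity of $g$ (so $g'\ge 0$ on $[b_L,0]$, $g'\le 0$ on $[0,-b_L]$), pulling out the common value $B(-p+b_L)$ to compare the halves; logconcavity of $B$ is never invoked.

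One place where you are more careful than the paper: the paper differentiates $\Delta(b_R)=-\Phi(b_R)$ only \emph{partially} in $b_R$, freezing $r$ at $\tfrac12$ and never writing the chain-rule contribution $\partial_r\Delta\cdot r_{\text{bind}}'(-b_L)$ that you retained. Since $\partial_r\Delta<0$ and $r_{\text{bind}}'>0$, the omitted term is negative and hence pushes $\Delta'$ in the direction the paper wants; so if the partial derivative alone already has the needed sign the conclusion follows, but you were right to keep the total derivative in view. The IBP reduction above is the missing ingredient that makes the partial derivative tractable; once you have it, you should scrutinise the split-at-zero inequality carefully, since that is where the entire argument rests.
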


The thresholds with a binding ($r_{\text{bind}}$) and non-binding ($r^{**}$) referendum are highlighted in \autoref{fig:3}. %In contrast with a binding referendum, non-binding one is an imperfect tool to resolve conflicts: whenever $\gamma\in[-b_R,-b_L)$, the referendum does not impact the election. When Right is the (ex-ante) relatively more divided party, i.e., when $b_R\in[0,-b_L)$, it anticipates a disadvantage in a multi-issue election and anticipates that the referendum is less likely to remove this disadvantage. Yet the 
\autoref{prop:bp} shows that the divided party's incentive to hold a referendum can \emph{intensify} when it is non-binding and may fail to resolve the conflict.  

%The reason is that, if the more divided party is indifferent about resolving the second issue, then \emph{conditional it not being resolved by the referendum}, the second issue benefits the more divided party.

\begin{figure}[t!]
\centering
  \includegraphics[width=.8\linewidth]{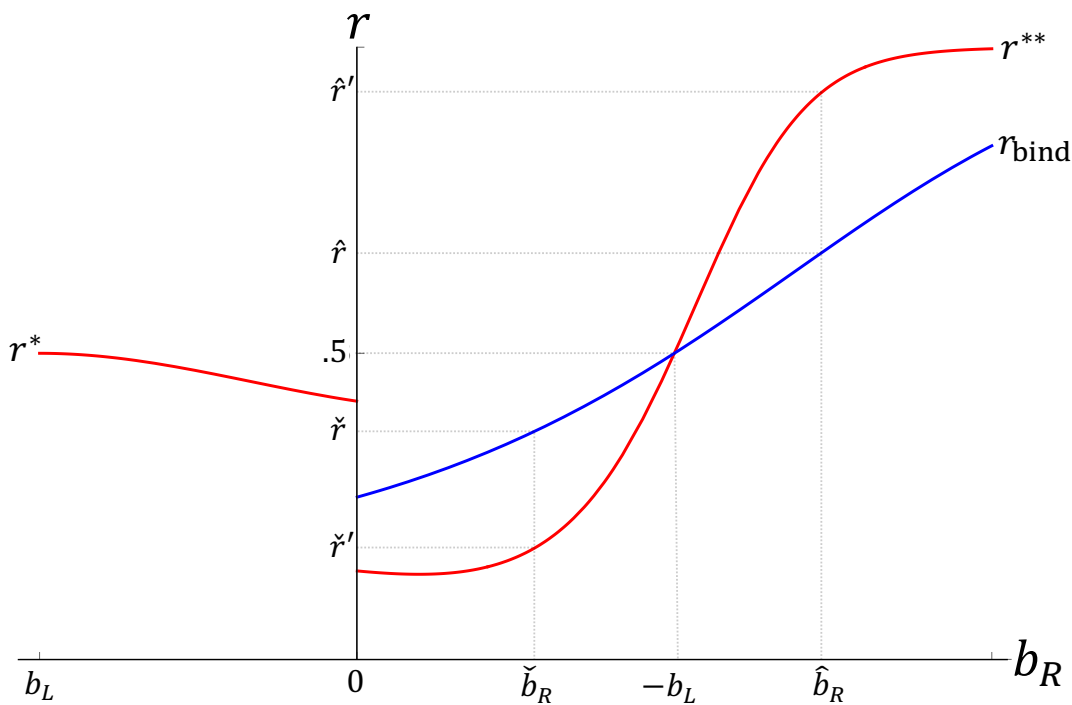}
\caption{Thresholds $r^*$ (\autoref{prop:bn}) and $r^{**}$ (\autoref{prop:bp}) are in red. The horizontal axis plots $b_R\ge -b_L$. Benchmark threshold $r_{\text{bind}}$ (\autoref{pro:bench}) is in blue. Primitives: $p=.05$, $b_L=-1$; for each $J\in\{L,R\}$, $b_i\sim N(b_i-b_J-\gamma,1)$, $\gamma\sim N(0,.5)$.}
\label{fig:3}
\end{figure}

To see why, suppose, $b_R=\check{b}_R<-b_L$, highlighted in \autoref{fig:3}, which implies that the party is indifferent between holding versus not holding a binding referendum when its share of core supporters is $\check{r}$. Because $\check{b}_R<-b_L$, threshold $\check{r}$ is strictly less than one half, reflecting that relatively divided Right is more vulnerable in a multi-issue election, and so prefers to take the second issue off the table unless severely disadvantaged in a single-issue election.

Binding referendums always resolve the second issue, but a non-binding referendum fails to resolve the second issue when \emph{both} parties are evenly split, i.e., when $\gamma\in(-b_R,-b_L)$. In those contexts, minority Right \emph{benefits} from a second salient electoral issue that  divides its stronger opponent. This makes a non-binding referendum relatively more valuable to minority Right than a binding one, and yields a lower threshold $\check{r}'<\check{r}$ than in the binding context. If, instead, $b_R=\hat{b}_R>-b_L$, so Right is relatively more united on the second issue, the logic reverses and $\hat{r}^{\prime}>\hat{r}$ on \autoref{fig:3}.\footnote{We are unable to establish a global result that $r^{**}>r^{\text{bind}}$ if and only if $b_R > -b_L$, although we believe it to be true. \autoref{prop:bp} instead establishes this ordering in a neighborhood of $b_R=-b_L$.}

\noindent \emph{Discussion.} California initiatives Proposition 187 and Proposition 8, discussed in the Introduction, illustrate how direct democracy can divide a strong candidate's base. Governor Pete Wilson trailed his opponent, Kathleen Brown, by 20 points a year before his November 1994 reelection.  Proposition 187 divided the parties in the November election, raising the GOP's appeal to marginal white voters \citep{tolbert1996race}.  %(\cite{tolbert1996race} found that 63\% of white voters supported the initiative). 
Wilson ultimately defeated Brown with a 14-point margin and Republicans captured control of the state legislature.\footnote{Because the referendum on Proposition 187 coincided with the election, an important dimension of the GOP's strategy was to mobilize turnout. We discuss this in a subsequent extension.}

The long-term prospects for Republicans in California were less favorable\footnote{Proposition 187 may have been partially responsible for these weakened prospects: see \url{shorturl.at/ipH16}} and by 2008, when Proposition 8 was put on the ballot by Republican activists, they were the clear minority party in California.  Consistent with \autoref{r:close}, Proposition 8's narrow passage ($52.2\%$ support) subsequently divided the parties in the 2010 Attorney General race, with Democratic candidate Kamala Harris pledging not to defend the voter-approved initiative.  While Harris ultimately prevailed, she did so with a margin of less than $1\%$ of the vote.  

Outside of the US, \cite{criddle1993french} documents EU referendums in France that  illustrate the logic of \autoref{prop:bn}. President Pompidou's 1972 EU enlargement referendum was partly intended to expose ``divisions among his political opponents (Socialists and Communists) in the run-up to parliamentary elections in which they were expected to make significant gains'' (p. 228). In 1992, President Mitterand faced the prospect of catastrophic defeat in upcoming parliamentary elections. A pre-election referendum on EU enlargement was partly an instrument ``to expose the policy divisions... between the parties of the right'' and reflected ``the continuation of domestic partisan objectives by other means; the issue had great mobilising [sic] potential for Mitterand, as it had a destabilising potential for the right'' (p. 229).  Years later, President Chirac's decision to convene a referendum on the EU Constitution in 2005 also seems to have succeed in dividing his main opponent, the Socialist party.  While the Socialist party officially backed the Constitution, a number of leading Socialists actively campaigned against it in the national referendum.  

Our framework also helps to interpret the aftermath of the 2016 Brexit referendum. Instead of settling the issue, Brexit's narrow victory in the referendum intensified its salience: Conservatives became the Brexit party, Labour the party of Remain. Contrary to expectations, the referendum revealed that a larger minority of Labour voters favored Brexit than the corresponding minority of Conservatives voters that favored Remain. In our framework, this amounts to a $\gamma$ realization smaller than $-b_L$, but closer to $-b_L$ than to $-b_R$ In the December 2019 election, 52 of the 54 previously Labour-held constituencies that elected a Conservative voted to Leave the EU in the 2016 referendum, and the Conservatives formed a majority government. The decision to hold the Brexit referendum was likely motivated, at least in part, by a third party threat (UKIP); we incorporate third party threats into the analysis in the extensions.%The decision to hold a referendum is best understood by incorporating a third party threat (e.g., UKIP) into the analysis; we do this in an extension, below. %RV2:  Tweaked last sentence.

Finally, while the above examples highlight cases in which referendums fail to resolve the issue, as \cite{stokes2020} argues, Australia's voluntary "Marriage Law Postal Survey" in 2017 demonstrates that they can achieve a resolution.   In our framework this occurs when one alternative wins a clear majority, revealing information about $\gamma$ that causes the parties to converge.  While the survey was legally non-binding, after $61.6\%$ supported legalizing same-sex marriage, the Marriage Amendment Act codifying it passed with near unanimous support of the legislators in the two major parties--Labour and the Liberals.

\section*{Do Referendums Improve Congruence?} %RV2:  Removed "Always" here... could raise an ex-ante vs ex-post question... seems better to just drop the always.

The  pre-eminent normative justification for direct democracy is that it better-aligns policy outcomes with the will of the majority \citep{besley2008issue, matsusaka2005direct}. This is because it can facilitate the majority-preferred policy on the issue subject to referendum, and by doing so, unbundle issues for voters improving alignment on other issues. We investigate this in the context of our framework. We say that a referendum \emph{improves congruence} on an issue if the (ex-ante) probability that the policy preferred by a majority of (policy-oriented) voters on that issue increases as a consequence of a referendum. 

Since a binding referendum ensures the majority preferred policy on the issue subject to referendum, it is immediate that it improves congruence on the second issue.

\begin{remark}\label{pro:refal}
A binding referendum strictly improves congruence on the emerging issue.
\end{remark}

What about the traditional issue?  When the parties initially mis-align on the emerging issue, a binding referendum ensures that the general election is fought only over the single issue.  One might conjecture that this always improves congruence on the traditional issue. This is false: aggregate uncertainty in single-issue elections due to noise voters means that the majority party is not guaranteed to win. As such, the majority-preferred policy may be more likely to win when the traditional issue is bundled with the second issue.  

A referendum improves congruence if and only if it improves the electoral prospects of the majority party. The previous characterization in \autoref{pro:bench} shows that a binding referendum may \emph{improve} or \emph{weaken} the majority party's prospects, and yields the following corollary.

\begin{corollary}\label{cor:trad_align1}
When $b_R<0$ a binding referendum has no impact on congruence on the traditional issue.  If $b_R>0$ a binding referendum improves congruence on the traditional issue when $b_R>-b_L$ if and only if $r \notin (1/2, r_{\text{bind}})$, and when $b_R<-b_L$ if and only if $r \notin ( r_{\text{bind}},1/2)$.
\end{corollary}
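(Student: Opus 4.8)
The plan is to reduce the congruence question on the traditional issue to the winning-probability comparison already carried out in \autoref{pro:bench}. First I would note that each party's traditional platform is fixed and the two are opposite ($x=0$ for Left, $x=1$ for Right), so the majority-preferred traditional policy is simply the platform of the party with $\max\{r,1-r\}$ core supporters: it is $x=1$ when $r>1/2$ and $x=0$ when $r<1/2$. Consequently the ex ante probability that the traditional issue is resolved in line with the majority equals the probability that this traditional-majority party wins the general election, and a binding referendum improves traditional congruence if and only if it raises that party's winning probability.

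Next I would invoke the zero-sum structure and the expression $D(r,b_L,b_R,p)$ in \eqref{D}: a binding referendum changes Right's winning probability by $D$ and Left's by $-D$. When $b_R<0$, \autoref{pro:bench} gives $D=0$ (the election is single-issue with or without a referendum), so neither party's prospects move and traditional congruence is unaffected — this is the first claim. When $b_R>0$, \autoref{pro:bench} gives $D>0\iff r>r_{\text{bind}}$. Hence the referendum helps Right (and hurts Left) exactly when $r>r_{\text{bind}}$, and helps Left (and hurts Right) exactly when $r<r_{\text{bind}}$.

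Combining these facts, traditional congruence strictly improves if and only if either $r>1/2$ and $r>r_{\text{bind}}$, or $r<1/2$ and $r<r_{\text{bind}}$. The last step is to place $r_{\text{bind}}$ relative to $1/2$ using part (1) of \autoref{pro:bench}: $r_{\text{bind}}=1/2$ at $b_R=-b_L$ and $r_{\text{bind}}$ is strictly increasing in $b_R$, so $r_{\text{bind}}>1/2$ when $b_R>-b_L$ and $r_{\text{bind}}<1/2$ when $b_R<-b_L$. In the first case the disjunction collapses to ``$r<1/2$ or $r>r_{\text{bind}}$'', i.e.\ $r\notin(1/2,r_{\text{bind}})$; in the second it collapses to ``$r<r_{\text{bind}}$ or $r>1/2$'', i.e.\ $r\notin(r_{\text{bind}},1/2)$, which is the claimed statement.

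This is essentially bookkeeping on top of \autoref{pro:bench}, so I do not anticipate a substantive obstacle. The only points needing care are (i) confirming that ``congruence on the traditional issue'' is exactly ``the traditional-majority party wins the general election,'' which rests on the parties' traditional platforms being fixed and opposed, and (ii) the knife-edge values $r=1/2$ (no strict traditional majority) and $r=r_{\text{bind}}$ (Right indifferent), at which the change in the relevant probability is zero and so there is no \emph{strict} improvement — this is precisely why the excluded intervals are open rather than closed.
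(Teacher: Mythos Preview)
Your proposal is correct and follows essentially the same approach as the paper: the paper's discussion surrounding the corollary identifies traditional-issue congruence with the majority party winning, invokes the characterization in \autoref{pro:bench} that $D>0\iff r>r_{\text{bind}}$, and uses the sign of $r_{\text{bind}}-1/2$ (determined by the sign of $b_R+b_L$) to read off the intervals where the minority party is helped. Your bookkeeping and treatment of the boundary cases $r=1/2$ and $r=r_{\text{bind}}$ are exactly right.
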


   Recall that Right benefits from a binding referendum if and only if $b_R>0$ and $r>r_{\text{bind}}$, illustrated in \autoref{fig:3}. Right is initially more divided when $0<b_R<-b_L$, and therefore prefers to take the emergent issue off the table even when it isn't the majority, reflected by threshold $r_{\text{bind}}<1/2$. So, when $r_{\text{bind}}<r<1/2$, a binding referendum raises minority Right's probability of winning the election, reducing congruence. Conversely, Right is initially more united when $b_R>-b_L$, and thus may prefer \emph{not} to take the emergent issue off the table even when it is a majority,  reflected by $r_{\text{bind}}>1/2$. So, when $1/2<r<r_{\text{bind}}$, a binding referendum reduces majority Right's probability of victory, reducing congruence. 

In addition to the above discussion, recall that non-binding referendum may introduce a new policy conflict into the election. Propositions \ref{prop:bn} and \ref{prop:bp} show when a non-binding referendum improves or weakens the majority party's chances, leading to the following corollary.

%As congruence on the traditional issue reflects the probability of the majority party winning, a referendum improves congruence if and only if it is the majority party who electorally benefits from it.  The previous characterization in \autoref{pro:bench}, \autoref{prop:bn} and \autoref{prop:bp}  identify when the majority or minority party benefits, leading to the following corollary.

\begin{corollary}\label{cor:trad_align2}
When $b_R<0$ a non-binding referendum improves congruence on the traditional issue if and only if $r \in (r^*, 1/2)$.  When $b_R>0$ a non-binding referendum improves congruence when $b_R>-b_L$ if and only if $r \notin (1/2, r^{**})$ and when $b_R<-b_L$ if and only if $ r \notin (r^{**}, 1/2)$.
\end{corollary}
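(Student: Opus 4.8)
The plan is to reduce \autoref{cor:trad_align2} to the same principle used for \autoref{cor:trad_align1}: a referendum improves congruence on the traditional issue if and only if it raises the winning probability of the party whose core supporters form a majority---Right when $r>1/2$ and Left when $r<1/2$. Because the contest is zero-sum between the two office-motivated parties, Left's winning probability rises exactly when Right's falls, so Left strictly benefits from a referendum if and only if Right strictly does not. The corollary then follows by intersecting the sets identified in \autoref{prop:bn} and \autoref{prop:bp} with the two regions $\{r<1/2\}$ and $\{r>1/2\}$, treating the measure-zero values $r=1/2$ and $r\in\{r^{*},r^{**}\}$---at which winning probabilities are unchanged and ``improvement'' is a knife edge---by the convention that a referendum improves congruence only when the majority-preferred policy becomes \emph{strictly} more likely.

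For $b_R<0$: by \autoref{prop:bn}, Right strictly benefits iff $r<r^{*}$, with $r^{*}\in(0,1/2)$, so Left strictly benefits iff $r>r^{*}$. When $r>1/2$ the majority party is Right, and $r>1/2>r^{*}$ precludes any benefit to Right, so congruence does not improve; when $r<1/2$ the majority party is Left, and congruence improves iff $r>r^{*}$, which intersected with $r<1/2$ is exactly $r\in(r^{*},1/2)$. For $b_R>0$: by \autoref{prop:bp}, Right strictly benefits iff $r>r^{**}$, so Left strictly benefits iff $r<r^{**}$. The remaining ingredient is the position of $r^{**}$ relative to $1/2$, and I claim $r^{**}>1/2$ when $b_R>-b_L$ and $r^{**}<1/2$ when $b_R<-b_L$. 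Granting this, if $b_R>-b_L$ then for $r>1/2$ (majority Right) congruence improves iff $r>r^{**}$, i.e. on $(r^{**},1)$, while for $r<1/2$ (majority Left) it improves iff $r<r^{**}$, which holds on all of $(0,1/2)$ since $r^{**}>1/2$; the union is $\{r\notin(1/2,r^{**})\}$. If $b_R<-b_L$ the mirror-image argument with $r^{**}<1/2$ gives $\{r\notin(r^{**},1/2)\}$.

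It therefore remains to locate $r^{**}$ relative to $1/2$, equivalently (by \autoref{prop:bp}, since $\tilde{D}(r,\cdot)>0\iff r>r^{**}$) to sign $\tilde{D}(1/2,b_L,b_R,p)$. Writing $\tilde{D}(1/2,\cdot)\propto\int_{A}\big(\tfrac12-s(\gamma;1/2,b_R,b_L,p)\big)g(\gamma)\,d\gamma$ with $A=\mathbb{R}\setminus(-b_R,-b_L)$ and $s(\gamma;1/2,\cdot)=\tfrac12 B(\gamma+b_R+p)+\tfrac12 B(\gamma+b_L-p)$, substitute $t=\gamma+\tfrac{b_R+b_L}{2}$. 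Then $A$ becomes the symmetric set $\{\,|t|>\tfrac{b_R-b_L}{2}\,\}$, the factor $\tfrac12-s$ becomes $\phi(t):=\tfrac12-\tfrac12 B(t+\delta)-\tfrac12 B(t-\delta)$ with $\delta=\tfrac{b_R-b_L}{2}+p>0$, and $\phi$ is odd because $B-\tfrac12$ is odd; moreover $|t+\delta|>|t-\delta|$ for $t>0$ together with monotonicity of the odd function $B-\tfrac12$ yields $\phi(t)<0$ for every $t>0$. Decomposing $g\!\big(t-\tfrac{b_R+b_L}{2}\big)$ into its even and odd parts in $t$: the even part integrates to zero against the odd $\phi$ over the symmetric region, and strict quasi-concavity and symmetry of $g$ make the odd part have the sign of $t(b_R+b_L)$ on $t>0$. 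Hence the surviving integral has the sign of $-(b_R+b_L)$, so $\tilde{D}(1/2)<0\iff b_R>-b_L$ (and $\tilde{D}(1/2)=0$ when $b_R=-b_L$, consistently recovering $r^{**}=1/2$), which is the claim.

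I expect this last signing step to be the main obstacle: it is the only place where the detailed shape assumptions on $B$ and $G$ (symmetry, monotonicity, strict quasi-concavity of $g$) are used, and the change of variables that simultaneously symmetrizes the domain $A$ and turns $\tfrac12-s$ into an odd function is the key trick. Everything else---identifying the majority party, invoking zero-sumness to pass from ``Right benefits'' to ``Left benefits,'' and assembling the intervals---is bookkeeping, with the only remaining care being the endpoint convention noted in the first paragraph.
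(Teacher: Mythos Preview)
Your reduction is exactly the paper's: congruence on the traditional issue improves if and only if the majority party's winning probability rises, and then \autoref{prop:bn} and \autoref{prop:bp} supply the ``Right benefits'' sets. For $b_R<0$ the two arguments coincide. For $b_R>0$ the paper simply asserts the result by analogy with \autoref{cor:trad_align1} (``the results for $b_R>0$ closely follow those for the binding case\ldots except with cutoff $r^{**}$'') without ever verifying that $r^{**}$ lies on the same side of $1/2$ as $r_{\text{bind}}$; neither \autoref{prop:bp} nor its proof pins down $\mathrm{sign}(r^{**}-1/2)$ globally, and the paper's footnote concedes it cannot even establish the global ordering of $r^{**}$ relative to $r_{\text{bind}}$. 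Your change of variables $t=\gamma+(b_R+b_L)/2$, which simultaneously symmetrizes the domain of integration and turns $\tfrac12-s(\cdot;1/2)$ into an odd function, followed by the even/odd decomposition of $g(t-(b_R+b_L)/2)$, gives a clean direct proof that $\tilde D(1/2)$ has the sign of $-(b_R+b_L)$ and hence $r^{**}>1/2\iff b_R>-b_L$. This step is correct (the key inputs are that $B-\tfrac12$ is odd and strictly increasing, and that symmetry plus strict quasi-concavity of $g$ give $g(t-m)-g(t+m)$ the sign of $tm$), and it genuinely fills a gap the paper leaves open. The trade-off is that the paper's route is a one-line appeal to analogy, whereas yours is self-contained and actually justifies the interval conventions in the corollary's statement.
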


%RV2: Should we illustrate the range in which alignment on traditional decreases on Figure 6? %PB2: let's discuss.
Recall that when $b_R<0$, Right favors a non-binding referendum if and only if its minority is sufficiently small $r<r^*$, illustrated in \autoref{fig:3}. A non-binding referendum therefore improves congruence only when it weakens minority Right's prospects, i.e., if and only if $r\in(r^*,1/2)$. The results for $b_R>0$ closely follow those for the binding case described in the previous Corollary, except with cutoff $r^{**}$ defined in \autoref{prop:bp}.

One might expect a referendum to at least improve congruence on the second issue.  This is true if the referendum is binding.  It is also true for a non-binding referendum when parties initially misalign on the second issue: a referendum at worst fails to settle the issue, and otherwise resolves policy in favor of the majority-preferred outcome.  However, if the parties initially align, a non-binding referendum may also weaken congruence on the issue subject to direct democracy. 

\begin{proposition}\label{ref:ladder}
When $b_R>0$, a non-binding referendum strictly improves congruence on the second issue. However if $b_R<0$ a non-binding referendum may improve or weaken congruence on the second issue.  \end{proposition}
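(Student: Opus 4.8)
The plan is to compute, in each regime, the ex-ante probability $\Pi$ that the $y$-policy preferred by a majority of policy voters is implemented, with and without a non-binding referendum, and to sign the difference. I use throughout that $\gamma^{*}(r,b_{R},b_{L})\in(-b_{R},-b_{L})$ (established below \eqref{gammas}): the electorate-wide majority prefers $y=1$ exactly for $\gamma\ge\gamma^{*}$, while party $J$'s median prefers $y=1$ exactly for $\gamma>-b_{J}$. A referendum reveals $\gamma$ exactly, so afterward $J$ commits to $y=1$ iff $\gamma>-b_{J}$; absent a referendum $J$ commits to $y=1$ iff $b_{J}\ge0$; and the implemented $y$ is always that of the election winner. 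Hence the probability of a majority-congruent $y$-outcome in a given state is $1$ when both platforms coincide with the majority's position, $\lambda(s(\gamma))$ (with $\lambda,s$ as in \eqref{lambda}, \eqref{eq:noref}) when only Right's does, and $1-\lambda(s(\gamma))$ when only Left's does.

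\textbf{Case $b_{R}>0$.} Absent a referendum the parties diverge on $y$ for all $\gamma$, so $\Pi_{\mathrm{no}}=\int_{\gamma^{*}}^{\infty}\lambda(s(\gamma))g(\gamma)\,d\gamma+\int_{-\infty}^{\gamma^{*}}(1-\lambda(s(\gamma)))g(\gamma)\,d\gamma$. With a referendum, $\gamma\ge-b_{L}$ gives convergence on $y=1$ (congruent, as $-b_{L}>\gamma^{*}$), $\gamma<-b_{R}$ gives convergence on $y=0$ (congruent, as $-b_{R}<\gamma^{*}$), and on $\gamma\in[-b_{R},-b_{L})$ platforms and hence congruence probabilities are unchanged. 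Subtracting and cancelling the overlap,
\[
\Pi_{\mathrm{ref}}-\Pi_{\mathrm{no}}=\int_{-\infty}^{-b_{R}}\lambda(s(\gamma))g(\gamma)\,d\gamma+\int_{-b_{L}}^{\infty}\bigl(1-\lambda(s(\gamma))\bigr)g(\gamma)\,d\gamma,
\]
which is non-negative---on the two tails the referendum replaces an uncertain outcome by a guaranteed majority-congruent one, and leaves the middle untouched---and strictly positive under the maintained assumptions.

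\textbf{Case $b_{R}<0$.} Now both parties play $y=0$ absent a referendum, so $y=0$ is implemented regardless of the winner and $\Pi_{\mathrm{no}}=G(\gamma^{*})$, the probability that $\gamma<\gamma^{*}$. With a referendum, $\gamma<-b_{R}$ gives convergence on $y=0$ and $\gamma\ge-b_{L}$ convergence on $y=1$ (both congruent), while $\gamma\in[-b_{R},-b_{L})$ gives divergence, congruent with probability $1-\lambda(s(\gamma))$ if $\gamma<\gamma^{*}$ and $\lambda(s(\gamma))$ if $\gamma\ge\gamma^{*}$. Assembling and using $G(\gamma^{*})=G(-b_{R})+\int_{-b_{R}}^{\gamma^{*}}g$,
\[
\Pi_{\mathrm{ref}}-\Pi_{\mathrm{no}}=\int_{\gamma^{*}}^{-b_{L}}\lambda(s(\gamma))g(\gamma)\,d\gamma+\bigl(1-G(-b_{L})\bigr)-\int_{-b_{R}}^{\gamma^{*}}\lambda(s(\gamma))g(\gamma)\,d\gamma .
\]
The first two terms are the gain from a referendum that reveals a shift large enough for Right (or both parties) to move correctly to $y=1$; the subtracted term is the loss when it reveals an intermediate $\gamma\in[-b_{R},\gamma^{*})$ that peels Right onto $y=1$ while the majority still wants $y=0$ and Right then wins. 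Both signs occur: letting $b_{L}\uparrow b_{R}$ squeezes $\gamma^{*}$ and $-b_{L}$ down to $-b_{R}$, so the two integral terms vanish and $\Pi_{\mathrm{ref}}-\Pi_{\mathrm{no}}\to1-G(-b_{R})>0$; whereas taking $b_{R}\uparrow0$, $b_{L}\downarrow-\infty$, $r$ slightly above $1/2$, $p$ small, and $G$ tightly concentrated near $0$ relative to the spread of $B$ sends $1-G(-b_{L})\to0$ and puts essentially all of $G$'s mass on $(0,\infty)$ inside $[-b_{R},\gamma^{*})$---where Right wins with probability near $\lambda(r)>0$ and implements the non-majority policy---so that $\Pi_{\mathrm{ref}}-\Pi_{\mathrm{no}}<0$. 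Hence a non-binding referendum may improve or weaken congruence on the second issue.

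The accounting for the convergence branches and all of Case $b_{R}>0$ is routine once $\gamma^{*}\in(-b_{R},-b_{L})$ is in hand. The main obstacle is Case $b_{R}<0$: the difference is genuinely of either sign, so one must pin down limiting regimes---or verify an explicit parameter vector (e.g.\ normal $B,G$ with $\sigma_{B}$ much larger than $\sigma_{G}$, $r$ a touch above $1/2$, $b_{R}$ just below $0$, $b_{L}$ very negative, $p$ small) for the weakening direction---to establish both possibilities.
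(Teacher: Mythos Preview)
Your proposal is correct, and the core computations coincide with the paper's: for $b_{R}>0$ the argument is identical, and for $b_{R}<0$ your difference expression is exactly the paper's \eqref{eq:Delta1}. The only divergence is how the ambiguity in the $b_{R}<0$ case is established---the paper simply appeals to a numerical figure, while you supply analytical limiting regimes, which is arguably more informative. Your positive-direction limit $b_{L}\uparrow b_{R}$ is clean. Your negative-direction limit is in the right spirit but contains a small slip: in that regime $s(\gamma)=rB(\gamma+b_{R}+p)+(1-r)B(\gamma+b_{L}-p)\approx rB(\gamma)\approx r/2$ for $\gamma$ near zero (since $b_{L}\to-\infty$ kills the second term and $b_{R},p\approx 0$), so Right's win probability is near $\lambda(r/2)$, not $\lambda(r)$. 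The conclusion still goes through---all you actually need is that $\lambda(s(\gamma))$ be bounded away from zero on the relevant interval, which holds e.g.\ whenever $\mu\le 1/2$; adding that to your parameter list closes the gap.
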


Absent a referendum,  when $b_R<0$, the emergent policy outcome is $y=0$; recall that this is preferred by the majority if and only if $\gamma\le \gamma^*(r,b_R,b_L)$, defined in \eqref{gammas}.
%\[ %RV2:  Don't need eqn
%r B(\gamma+b_R)+(1-r) B(\gamma+b_L)\le 1/2\iff \gamma\le \gamma^*(r,b_R,b_L).
%\]
%RV2:  Several edits to text below.  Original commented out.  Changed order to build to eqn and to highlight the cost of referendum at end of possibilities.

To understand the effect of a non-binding referendum, note that if it reveals that $\gamma<-b_R$ then the policy outcome remains $y=0$, so there is no referendum-induced policy change.   If $\gamma>-b_L$, a non-binding referendum re-aligns the parties in favor of the majority-preferred policy $y=1$.  This occurs with probability $1-G(-b_L)$, yielding a congruent rather than non-congruent policy.

If $\gamma^*(r,b_R,b_L)<\gamma<-b_L$, the parties diverge and a majority of policy voters favor $y=1$. Absent a referendum, congruent policy $y=1$ is never implemented, but that policy is implemented after a non-binding referendum if and only if Right wins.  Right wins with probability $\lambda(s(\gamma;r,b_R,b_L,p))>0$, defined in \eqref{lambda} and \eqref{eq:noref}. %where $s(\gamma;r,b_R,b_L,p)$ is Right's share of policy voters in a multi-issue election, defined in \eqref{eq:noref}, and $\lambda$ is Right's probability of winning, defined in \eqref{lambda}.  
%This yields a net increase in congruence of $\lambda$ over not having a referendum.

However, if $-b_R<\gamma<\gamma^*(r,b_R,b_L)$, the parties diverge and a majority of policy voters favor $y=0$. Absent a referendum, the majority preferred policy $y=0$ is always implemented, but after a non-binding referendum $y=0$ is implemented if and only if Left wins, which occurs with probability $1-\lambda(s(\gamma;r,b_R,b_L,p))<1$. %This yields a net decrease in congruence of $-\lambda$.  

\begin{figure}[t!]
\centering
  \includegraphics[width=.8\linewidth]{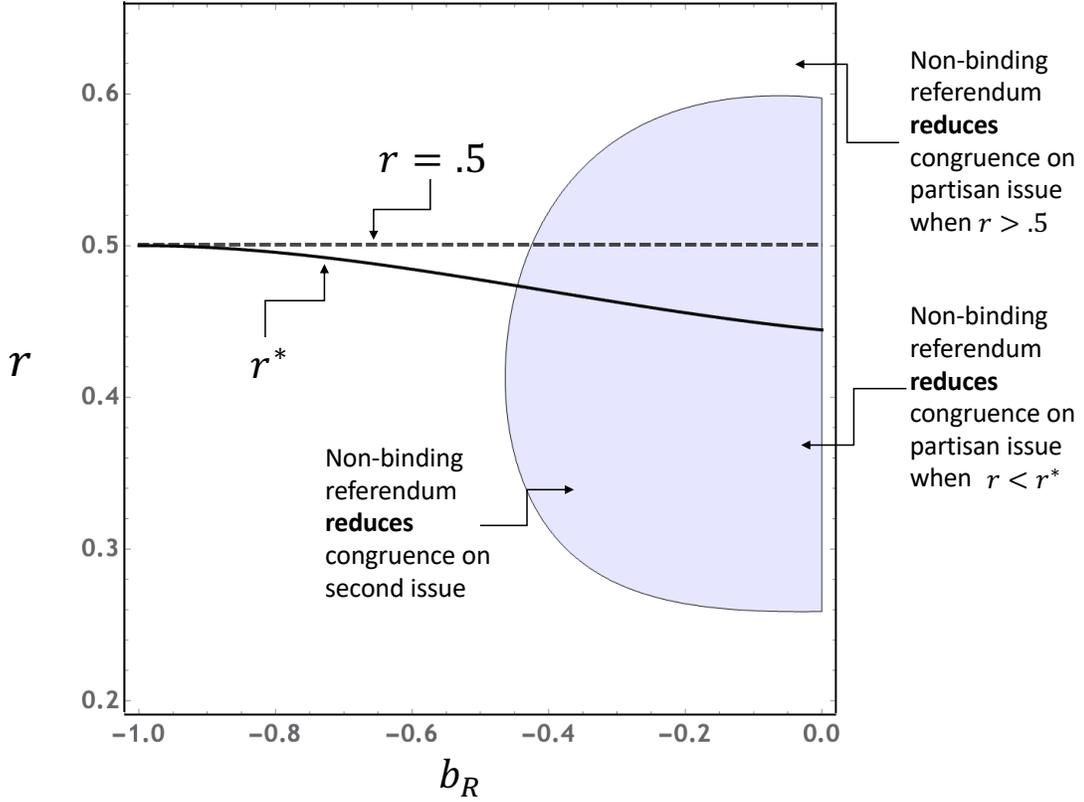}
\caption{\emph{Consequences of non-binding referendum for alignment on the emerging issue}. The figure plots threshold $r^*$  defined in \autoref{prop:bn}, such that Right benefits from a referendum if and only if $r<r^*$. The shaded region identifies  parameters such that \eqref{eq:Delta1} is strictly negative. Primitives: $p=1$, for each $J\in\{L,R\}$, $b_i-[b_K+\gamma]$ is standard Logistic, $\gamma\sim N(0,.5)$, $b_L=-1$, $\mu=.7$.} 
\label{fig:g}
\end{figure}

Putting this together, the net change in the probability of the majority-preferred outcome is
\begin{align}
1-G(-b_L)+& \int_{\gamma^*(r,b_R,b_L)}^{-b_L}\lambda(s(\gamma;r,b_R,b_L,p))g(\gamma)d\gamma-\int_{-b_R}^{\gamma^*(r,b_R,b_L)}\lambda(s(\gamma;r,b_R,b_L,p))g(\gamma)d\gamma
\label{eq:Delta1}.
\end{align}
Expression \eqref{eq:Delta1} can be negative if the third term is sufficiently large in magnitude.  That is, a non-binding referendum may lower congruence on the policy subject to the referendum! This happens when $b_L$ is very low, and $b_L$ and $b_R$ are quite different, and so the parties are likely to diverge on the second issue but $y=0$ is very likely to be majority-preferred.

\autoref{fig:g} illustrates that a referendum may either increase or decrease congruence on the second issue, i.e., that \eqref{eq:Delta1} may be positive or negative.
%\footnote{General conditions for a referendum to increase or decrease alignment on the second issue are elusive, since they depend on curvature of $B(\cdot)$, $g(\cdot)$ and other primitives.}  
The red line plots threshold $r^*$  defined in \autoref{prop:bn}, such that Right benefits from a referendum if and only if $r<r^*$. The shaded region identifies parameters such that the probability of the majority-preferred outcome on the emerging issue is \emph{lower} after a non-binding referendum. %RV2:  I didn't do it, but I would lean towards dropping the footnote.  I think an example to show it can go either way is enough. %PB2: Agree!
Recall from \autoref{cor:trad_align2} that a non-binding referendum weakens congruence on the traditional partisan issue whenever $r \notin (r^*, 1/2)$.  Thus, for all parameters in the grey shaded region below $r^*$ or above $1/2$, congruence is simultaneously decreased on \emph{both} issues if a referendum is held.
%RV2:  I think we should put a line at 1/2 on Figure 6 to highlight traditional dimension.  Made some edits to emphasize that we get lower congruence in both dimensions simultaneously.
%PB2: one sentence here to query. %RV3:  What if we drop sentence in red?

\section*{Extensions} 

\noindent In a Supplemental Appendix, we pursue two extensions of our benchmark model.

\noindent\textbf{Third Party Threat.} In many cases, third parties form around emerging issue conflicts. These parties may be unlikely to win elections, but nonetheless disproportionately cannibalize the core supporters of one of the major parties. For example, UKIP mainly appealed to British Conservatives, while the Scottish Nationalist Party (SNP) tends to attract socially liberal (i.e., Labour-inclined) voters that support leaving the United Kingdom, but steals fewer votes from the Scottish Conservatives, whose members are largely pro-union.

To analyze these contexts, we augment our benchmark model by introducing an additional party, \emph{Third}, with fixed policy platform $x=1$ and $y=1$. This means that Third and Right have the same traditional partisan policy. We assume that Third has negative valence and therefore only wins support from voters who strictly prefer it on policy grounds. % a deterministic valence $v<0$: this means that voters resolve policy indifference in favor of one of the two major parties.\footnote{Beyond the specification of whether $v$ is positive or negative, its magnitude does not play an important role in our results. Its key role is to resolve indifference on policy grounds.}  
Finally, we study the most interesting context in which ${b}_L<0$ and ${b}_R<0$, so the two `major' parties, Left and Right, initially align on $y=0$. Thus, at the outset, only Third offers the policy $y=1$.  For example, prior to the Brexit referendum, both the British Conservative and Labour parties had official policies of `Remain' ($y=0$) while UKIP favored `Leave' ($y=1$).  

A referendum might reveal information that causes one or both major parties to position at $y=1$, eliminating Third's appeal.  Given that Third's policy aligns with Right's on the traditional issue, and appeals to relatively more of Right's voters on the second, Right loses a larger share of its voters to Third absent a referendum.  Especially when Right is in the majority, it therefore has an incentive to use a referendum in order to eliminate the Third party.   This is in contrast to our two-party baseline model, in which Right would never initiate a referendum when in the majority and the two parties initially align on $y=0$.

However, majority Right may prefer not to use a referendum if its base is very divided on the second issue relative to Left's, even though its relative vulnerability to Third is especially acute. The reason is that, if the referendum mis-aligns the parties on the second issue, reclaiming its core supporters from Third may be a pyrrhic victory for Right if in doing so it cedes a large enough share of its base to Left. 

\noindent\textbf{Timing and Mobilizing Turnout.} In many countries, referendums are held \emph{simultaneously} with ordinary elections.  This is common in California, for example, the US state that makes the most extensive use of direct democracy. Simultaneous timing implies that parties cannot adjust their electoral platforms in response to referendum results. Nonetheless, these referendums can impact elections by affecting who participates. 

Mobilization via voter initiatives is a pervasive tactic in the United States, such as the California Republican Party's support of Proposition 187,  discussed earlier. Similarly, in 1992 the California Democratic Party spent \$1 million to support a voter registration and ``get-out-the-vote'' drive against Proposition 165, which would have increased Governor Pete Wilson's ability to cut spending. In fact, the Federal Election Commission subsequently sued the party, arguing that the campaign was aimed at boosting turnout for Democratic candidates running for federal office \citep[741]{hasen2000parties}.  In 2004, eleven states held votes on initiatives outlawing same-sex marriage, viewed as benefiting George W. Bush's re-election by increasing turnout among social conservatives \citep{campbell2008religion}.  Motivating turnout was also likely the explanation for nationalist incumbent president Chen Shui-bian's decision to announce a referendum on relations with China on the same day as his own re-election contest in 2004 \citep{rawnsley2005peaceful}.  

To get at these issues, we pursue an extension in which a referendum is held simultaneously with the election. We focus on a binding referendum and assume that if a referendum is not held a status quo is maintained on the second issue.  Voters have random voting costs, and show up to vote if and only if the stakes from the outcome of the election (including the referendum if one is held) are large enough. A referendum raises the stakes and thus increases turnout among policy voters, which all else equal benefits the majority party.  Right thus benefits from a referendum if the fraction of conservatives is high enough.  However, turnout increases more when the average intensity of preferences in party $J$ --- reflected by $|b_J|$--- is higher.  Thus, even if they are in the minority,  Right benefits from a referendum if it has a sufficient intensity advantage over Left.   %We show that the incumbent party pursues a referendum if and only if the average intensity of preferences in party $J$---captured by $|b_J|$---is large enough. In these contexts, 
The reason is that the second issue's presence on the ballot galvanizes the party's core supporters to vote in the referendum, and thus also in the ordinary election. 
%

%\noindent\textbf{What if Referendums Shift Preferences?} Our benchmark interprets the aggregate preference shock $\gamma$ as a shift in preferences that can be revealed by a referendum. However, referendums may themselves shift preferences---for example, through the public campaign preceding a vote. To capture this idea, we adapt our model to the case that $\gamma$'s realization is triggered if and only if a referendum is held. We show that the tradeoffs for initiating a referendum are fundamentally the same as in our baseline model. 

    %All of our results (Proposition \ref{pro:bench} through \ref{prop:bp}) extend without amendment.\footnote{The threshold $r^{**}$ in this case is \emph{lower} than in our benchmark when $b_R<-b_L$, and \emph{higher} than our benchmark when $b_R>-b_L$. By triggering $\gamma$'s realization, the referendum introduces aggregate uncertainty; this creates an additional incentive for an (ex-ante) divided party to use a referendum, and weakens the incentive of a united party to use a referendum.}

\section*{Conclusion}

We analyzed the strategic use of direct democracy by an office-seeking party, and considered how this depends on whether parties are bound to honor the outcome of the referendum. We asked: under what circumstances can a party use a referendum to improve its electoral prospects? How does this depend on the extent of initial disagreement both across and within parties, as well as the party's initial electoral (dis)advantage? 

A referendum can be used to settle divisive issues. But, if non-binding, it can also create new dimensions of conflict. While strong and divided parties have an incentive to use referendums to take issues off the table, weak and cohesive parties can use them to create new conflicts.  Moreover, non-binding referendums may weaken congruence not only on the referendum issue but even on those ostensibly beyond the reach of direct democracy. %RV2:  Edited last sentence which still claimed binding refs always improve congruence. 

Our analysis abstracts from many important features of real-world referendums. In practice, policymakers may be able to fine-tune the wording in ways that confer more or less ex-post discretion, allowing parties to decide not only whether to initiate a referendum but whether to make it binding or not. Another dimension policymakers may be able to influence is the timing of a referendum---whether to hold it simultaneously with the general election, hold a special election, or, as in the 2022 Kansas abortion referendum, simultaneously with the primary.   A party may then optimize timing to maximize the electoral benefit, or to minimize the electoral cost of a referendum that is required for the party's policy goals.  Finally, an important feature of referendums is that their success is often tied to the popularity and valence of the politicians who initiate them. We view these as important avenues for future research.  %RV2:  After watching Kansas I'm inclined to put more about timing somewhere, possibly here.  Added the second and third to last sentence to do so.

\bibliographystyle{apsr}
\bibliography{BVW}

\newpage

\appendix 
\renewcommand{\theequation}{\Alph{section}.\arabic{equation}}
\renewcommand{\theproposition}{\Alph{section}.\arabic{proposition}}
\renewcommand{\theremark}{\Alph{section}.\arabic{remark}}
\setcounter{remark}{0}
\setcounter{equation}{0}

\begin{titlepage}
\section*{Supplementary Appendix to \emph{Pandora's Ballot Box} (For Online Publication)}
%\end{titlepage}
%\newpage
%\begin{titlepage}

%\begin{titlepage}

\subsection*{Table of Contents}

\begin{itemize}
\item[1.] Appendix A: Proofs of Propositions (\emph{page 1})
\item[2.] Appendix B: Referendum and Third Party Threat (\emph{page 9})
\item[3.] Appendix C: Direct Democracy and Mobilization (\emph{page 15})
%\item[4.] Appendix D: Referendums Shift Preferences (\emph{page 19})
\end{itemize}
\end{titlepage}
\newpage

\setcounter{page}{1}

\section{Proofs of Propositions}  
 
\noindent\textbf{Proof of \autoref{pro:bench}.} Right (strictly) benefits from a referendum when $b_R\ge 0$ if and only if $D(r, b_L, b_R, p)$ defined in \eqref{D} is strictly positive.  Note that 
\begin{align}
D(r, b_L, b_R, p) & \propto  \int_{-\infty}^{\infty}(r-s(\gamma, r, b_L, b_R, p))g(\gamma)d\gamma\nonumber \\
&= \int_{-\infty}^{\infty} [r (1-B(p+\gamma+b_R)+B(\gamma+b_L-p))-B(\gamma+b_L-p)]g(\gamma)d\gamma.\label{eq:condn}
\end{align}
It follows from inspection that $D(r, b_L, b_R, p)$ is strictly increasing in $r$, $b_L$, and $b_R$, and strictly positive if and only if 
\[
r>\frac{\int_{-\infty}^{\infty}B(\gamma+b_L-p)g(\gamma)d\gamma}{\int_{-\infty}^{\infty}(1-B(p+\gamma+b_R)+B(\gamma+b_L-p))g(\gamma)\,d\gamma}\equiv r_{\text{bind}}\in(0,1).
\]
It is immediate that $r_{\text{bind}}$ strictly \emph{increases} in each of $b_R$ and $b_L$.  

Now consider the case where $b_R=-b_L$.  Using $B(z)=1-B(-z)$ for all $z\in\mathbb{R}$ and $g(\gamma)=g(-\gamma)$ for all $\gamma\in\mathbb{R}$, we have:
\begin{align*}
\int_{-\infty}^{\infty}B(\gamma+b_L-p)g(\gamma)d\gamma&=\int_{-\infty}^{\infty}(1-B(p-\gamma-b_L))g(\gamma)d\gamma\\
&=\int_{-\infty}^{\infty}(1-B(p-\gamma+b_R))g(\gamma)d\gamma\\
&=\int_{-\infty}^{\infty}(1-B(p+\gamma+b_R))g(\gamma)d\gamma.
\end{align*}
We conclude that $r_{\text{bind}}=\frac{1}{2}$ when $b_R=-b_L$. 

We now turn to part (2).  Straightforward algebra reveals that
%RV2:  Wrote as partial derivative.  Remove the inequality since more clear what RHS of equation is that way.
\begin{align}
\frac{\partial r_{\text{bind}}}{\partial p} \propto r_{\text{bind}}-\frac{\int_{-\infty}^{\infty}b(p-\gamma-b_L)g(\gamma)\,d\gamma}{\int_{-\infty}^{\infty}b(p-\gamma-b_L)g(\gamma)\,d\gamma+\int_{-\infty}^{\infty}b(-p-\gamma-b_R)g(\gamma)\,d\gamma} \label{eq:theonewewant}
\end{align}
Now recall that $r_{\text{bind}}=\frac{1}{2}$ when $b_R=-b_L$ and that $b(z)=b(-z)$ for all $z\in\mathbb{R}$ and $g(\gamma)=-g(\gamma)$ for all $ \gamma \in \mathbb{R}$.  Thus when $b_R=-b_L$:
\[
\int_{-\infty}^{\infty}b(p-\gamma-b_L)g(\gamma)\,d\gamma=\int_{-\infty}^{\infty}b(-p+\gamma-b_R)g(\gamma)\,d\gamma=\int_{-\infty}^{\infty}b(-p-\gamma-b_R)g(\gamma)\,d\gamma,
\]
%\[
%\int_{-\infty}^{\infty}b(p-\gamma-b_L)g(\gamma)\,d\gamma=\int_{-\infty}^{\infty}b(-p+\gamma-b_R)g(\gamma)\,d\gamma=\int_{-\infty}^{\infty}b(u)g(u+p+b_R)\,du,
%\]
%and also
%\[
%\int_{-\infty}^{\infty}b(-p-\gamma-b_R)g(\gamma)\,d\gamma=\int_{-\infty}^{\infty}b(\tilde{u})g(-p-b_R-\tilde{u})\,d\tilde{u}=\int_{-\infty}^{\infty}b(\tilde{u})g(\tilde{u}+p+b_R)\,d\tilde{u}.
%\]. %RV2:  Didn't explicitly show the change of variables, we do the same step above.  Added some discussion.
and so the RHS of \eqref{eq:theonewewant} is zero. So if we can show that the RHS of \eqref{eq:theonewewant} is strictly increasing in $b_R$ we can conclude that $r_{bind}$ is strictly decreasing in $p$ when $b_R<-b_L$ and strictly increasing when $b_R>-b_L$.

Finally the derivative of the RHS of \eqref{eq:theonewewant} with respect to $b_R$ is:
\begin{align}
\frac{d r_{\text{bind}}}{d b_R}+\frac{\int_{-\infty}^{\infty}b'(-p-\gamma-b_R)g(\gamma)\,d\gamma\int_{-\infty}^{\infty}b(p-\gamma-b_L)g(\gamma)\,d\gamma}{[\int_{-\infty}^{\infty}b(p-\gamma-b_L)g(\gamma)\,d\gamma+\int_{-\infty}^{\infty}b(-p-\gamma-b_R)g(\gamma)\,d\gamma]^2}\label{eq:datin}.
\end{align}
As we already established that the first term is strictly positive it is sufficient to show that 
\[
\int_{-\infty}^{\infty}b'(-p-\gamma-b_R)g(\gamma)\,d\gamma\ge 0.
\]
%RV2:  Edits below to only use one change of variables and remove the \int_{0}^{\kappa} terms which I don't think are directly necessary.  I believe they all cancel out this is just more direct.
To verify this inequality, define $u \equiv -p-b_R-\gamma$ and notice :
\begin{align}
\int_{-\infty}^{\infty}b'(-p-\gamma-b_R)g(\gamma)\,d\gamma=&\int_{0}^{\infty}b'(u)g(-p-b_R-u)\,du+\int_{-\infty}^{0}b'(u)g(-p-b_R-u)\,du\nonumber\\
=&\int_{0}^{\infty}b'(u)g(-p-b_R-u)\,du - \int_{0}^{\infty}b'(-u)g(-p-b_R+u)\,du\nonumber\\
=&\int_{0}^{\infty}b'(u)[g(-p-b_R-u)-g(u-p-b_R)]\,du\label{eq:final}
\end{align}
where the last equality uses that $b'(u)=-b'(-u)$ for all $u\in(-\infty,\infty)$. We have $b'(u)\le 0$ for all $u\ge 0$ and note that $-p-b_R-u<0$ for all $u\ge 0$, since $p>0$ and $b_R>0$. Since $g(x)$ increases in $x\le 0$ and decreases in $x\le 0$ and $g(x)=g(-x)$ for all $x\in\mathbb{R}$, we observe that $g(-p-b_R-u) \leq g(u-p-b_R)$ for all $u> 0$. We conclude that \eqref{eq:final} is weakly positive, and so $r_{\text{bind}}$ strictly increases in $p$ if $b_R > -b_L$; if $b_R < -b_L$, it strictly decreases in $p$. 
\begin{comment}
To verify this inequality, define $\kappa\equiv p+b_R$ and notice :
\begin{align*}
\int_{-\infty}^{\infty}b'(-p-\gamma-b_R)g(\gamma)\,d\gamma=&\int_{-\infty}^{-\kappa}b'(-\kappa-\gamma)g(\gamma)\,d\gamma+\int_{\kappa}^{\infty}b'(-\kappa-\gamma)g(\gamma)\,d\gamma\nonumber\\
&+\int_{-\kappa}^{\kappa}b'(-\kappa-\gamma)g(\gamma)\,d\gamma.
\end{align*}
Performing a change of variables $u=-\kappa-\gamma$, we can re-write this:
\begin{align}
\int_{-\infty}^{\infty}b'(-p-\gamma-b_R)g(\gamma)\,d\gamma=&\int_{2\kappa}^{\infty}b'(u)g(-\kappa-u)\,du+\int_{-\infty}^{-2\kappa}b'(u)g(-\kappa-u)\,du\nonumber\\
&+\int_{0}^{2\kappa}b'(u)g(-\kappa-u)\,du+\int_{-2\kappa}^{0}b'(u)g(-\kappa-u)\,du\nonumber\\
%=&\int_{2\kappa}^{\infty}b'(u)g(-\kappa-u)\,du+\int_{2\kappa}^{\infty}b'(-u)g(-\kappa+u)\,du\nonumber\\
%&+\int_{0}^{2\kappa}b'(u)g(-\kappa-u)\,du+\int_{0}^{2\kappa}b'(-u)g(-\kappa+u)\,du\nonumber\\
=&\int_{2\kappa}^{\infty}b'(u)g(-\kappa-u)\,du+\int_{2\kappa}^{\infty}b'(-u)g(-\kappa+u)\,du\nonumber\\
&+\int_{0}^{2\kappa}b'(u)g(-\kappa-u)\,du+\int_{0}^{2\kappa}b'(-u)g(-\kappa+u)\,du\nonumber\\
=&\int_{0}^{\infty}b'(u)[g(-\kappa-u)-g(u-\kappa)]\,du\label{eq:final}
\end{align}
where we use $b'(u)=-b'(-u)$ for all $u\in(-\infty,\infty)$. We have $b'(u)\le 0$ for all $u\ge 0$. Notice also that $-\kappa-u<0$ for all $u\ge 0$, since $\kappa>0$ when $b_R>0$. Since $g(x)$ increases in $x\le 0$ and decreases in $x\le 0$ and $g(x)=g(-x)$ for all $x\in\mathbb{R}$, we observe that $g(-\kappa-u)<g(u-\kappa)$ for all $u> 0$. We conclude that \eqref{eq:final} is weakly positive, and so $r_{\text{bind}}$ increases in $p$ if $b_R\ge -b_L$; otherwise, it decreases in $p$.
\end{comment}
$\square$

\noindent\textbf{Proof of \autoref{prop:bn}}. The net benefit to Right of a referendum, $\bar{D}(r, b_L, b_R, p),$ defined in \eqref{eq:bn} is proportional to 
\begin{equation}
\int_{-b_R}^{-b_L}((1-r) B(\gamma+b_L-p)-rB(-p-\gamma-b_R))g(\gamma)\,d\gamma\label{eq:condnob}.
\end{equation}
\eqref{eq:condnob} is strictly decreasing  in $r$, strictly positive when $r=0$, and weakly negative when $r=\frac{1}{2}$ if and only if
\begin{align}
 \int_{-b_R}^{-b_L}(B(-p+\gamma+b_L)-B(-p-\gamma-b_R))g(\gamma)\,d\gamma\le 0\label{inequality0}.
\end{align}
Since $g(\gamma)$ weakly decreases for $\gamma\ge 0$ and $B(-p+\gamma+b_L)-B(-p-\gamma-b_R)$ strictly increases in $\gamma$, we may apply Chebyshev's integral inequality, which states that: 
\begin{align*}
 &\int_{-b_R}^{-b_L}(B(-p+\gamma+b_L)-B(-p-\gamma-b_R))g(\gamma)\,d\gamma\\
& \le \frac{1}{b_R-{b}_L}\int_{-b_R}^{-b_L}(B(-p+\gamma+b_L)-B(-p-\gamma-b_R))\,d\gamma\times \left(\int_{-b_R}^{-b_L}g(\gamma)\,d\gamma\right).
\end{align*}
Since $\int_{-b_R}^{-b_L}B(-p+\gamma+b_L)\,d\gamma-\int_{-b_R}^{-b_L}B(-p-\gamma-b_R)\,d\gamma=0$, it follows that \eqref{inequality0} holds.  From the intermediate value theorem we obtain that there exists a unique threshold $r^* \leq \frac{1}{2}$, below which \eqref{eq:condnob} is strictly positive and a referendum strictly increases Right's probability of winning. 

We now turn to comparative statics with respect to $b_R$ and $b_L$.  Re-arranging the condition when \eqref{eq:condnob} is equal to $0$, we get cutoff 
\[
r^*=\frac{1}{1+\frac{\int_{-b_R}^{-b_L}[B(-p-\gamma-b_R)g(\gamma)]\,d\gamma}{\int_{-b_R}^{-b_L}[B(-p+\gamma+b_L)g(\gamma)]\,d\gamma}}\equiv (1+\tau(b_L, b_R))^{-1}.
\]
As we have established that $r^* \leq 1/2$ it follows that $\tau(b_L, b_R) \geq 1$.  It is also immediate from inspection that 
\begin{equation}
\tau < \frac{B(-p)}{B(-p+b_L-b_R)}.  \label{in:tau}  
\end{equation}
And, applying L'Hopital's rule, 
\[
\lim_{b_R \rightarrow b_L} \tau(b_L, b_R)=\lim_{b_R \rightarrow b_L} \frac{B(-p)g(-b_R)-\int_{-b_R}^{-b_L}B'(-p-\gamma-b_R)g(\gamma)d\gamma}{B(-p-b_R+b_L)g(-b_R)}=\frac{B(-p)g(-b_L)}{B(-p)g(-b_L)}=1.
\]

Since $\tau(b_L, b_R) \geq 1$ it follows that $\tau$ is increasing in $b_R$ in a neighborhood of $b_L$.  We now show that $\tau(b_L, b_R)$ is increasing in $b_R$ for all $b_R \in (b_L, 0)$.  Given that $\tau$ is initially increasing, it is sufficient to show that $\tau$ is quasi-convex in $b_R$.  
   
Differentiating with respect to $b_R$: 
\begin{align*}
\frac{\partial \tau(b_L, b_R)}{\partial b_R} \propto &\int_{-b_R}^{-b_L}[B(-p+\gamma+b_L)g(\gamma)]\,d\gamma\times\left(B(-p)g(-b_R)-\int_{-b_R}^{-b_L}B'(-p-\gamma-b_R)g(\gamma)d\gamma\right)\\
&- \int_{-b_R}^{-b_L}B(-p-\gamma-b_R)g(\gamma)\,d\gamma\times\left(B(-p-b_R+b_L)g(-b_R)\right)\\
=\,&g(-b_R)B(-p)\int_{-b_R}^{-b_L}B(-p+\gamma+b_L)g(\gamma)d\gamma\\
&-g(-b_R)B(-p-b_R+b_L)\int_{-b_R}^{-b_L}B(-p-\gamma-b_R)g(\gamma)\,d\gamma \\
&-\int_{-b_R}^{-b_L}B(-p+\gamma+b_L)g(\gamma)\,d\gamma \int_{-b_R}^{-b_L}B'(-p-\gamma-b_R)g(\gamma)\,d\gamma\\
\propto\,&g(-b_R)(B(-p)-\tau(b_L, b_R)B(-p-b_R+b_L))-\int_{-b_R}^{-b_L}B'(-p-\gamma-b_R)g(\gamma)\,d\gamma \\
= & [g(-b_L)-g(-b_R)\tau(b_L, b_R)]B(-p-b_R+b_L)-\int_{-b_R}^{-b_L}B(-p-\gamma-b_R)g'(\gamma)\,d\gamma \\
\equiv &X(b_R).
\end{align*}
The last equality follows from integration by parts.
%\begin{align*}
%-\int_{-b_R}^{-b_L}B'(-p-\gamma-b_R)g(\gamma)\,d\gamma=B(-p+b_L-b_R)g(-b_L)-B(-p)g(-b_R)-\int_{-b_R}^{-b_L}B(-p-\gamma-b_R)g'(\gamma)\,d\gamma 
%\end{align*} 
%and so 
%\[
%\frac{\partial \tau(b_R, b_L)}{\partial b_R} \propto
%\]

Since $\frac{\partial \tau}{\partial b_R}=0$ if and only if $X(b_R)=0$, to show that $\tau$ is quasi-convex in $b_R$ it is sufficient to show that $X'(b_R)>0$ when $\frac{\partial \tau}{\partial b_R}=0$.  Differentiating,  
\begin{align*}
X'(b_R)=& -[g(-b_L)-g(-b_R)\tau(b_L, b_R)]B'(-p-b_R+b_L)+g'(-b_R)\tau(b_L, b_R)B(-p-b_R+b_L) \\
&+ \int_{-b_R}^{-b_L}B'(-p-\gamma-b_R)g'(\gamma)\,d\gamma-B(-p)g'(-b_R) \\
=&g'(-b_R)[\tau(b_L, b_R)B(-p-b_R+b_L)-B(-p)] \\
&+\int_{-b_R}^{-b_L}B'(-p-\gamma-b_R)g'(\gamma)\,d\gamma-[g(-b_L)-g(-b_R)\tau(b_L, b_R)]B'(-p-b_R+b_L) \\
& > \int_{-b_R}^{-b_L}B'(-p-\gamma-b_R)g'(\gamma)\,d\gamma-[g(-b_L)-g(-b_R)\tau(b_L, b_R)]B'(-p-b_R+b_L).
\end{align*}
The last inequality follows from \eqref{in:tau} and that $g'(-b_R)< 0$ given that$ -b_R>0$.

Furthermore, when $X(b_R)=0$ it follows that 
\begin{align*}
[g(-b_L)-g(-b_R)\tau(b_L, b_R)]B'(-p-b_R+b_L)&=\int_{-b_R}^{-b_L}\frac{B'(-p-b_R+b_L)}{B(-p-b_R+b_L)}B(-p-\gamma-b_R)g'(\gamma)\,d\gamma \\
& \leq \int_{-b_R}^{-b_L}\frac{B'(-p-b_R-\gamma)}{B(-p-b_R-\gamma)}B(-p-\gamma-b_R)g'(\gamma)\,d\gamma \\
&=\int_{-b_R}^{-b_L}B'(-p-b_R-\gamma)g'(\gamma)\,d\gamma.
\end{align*}
The inequality follows because $g'(\gamma) \leq 0$, $-\gamma \geq b_L$ and $B(\cdot)$ is log-concave.  So
\[
X'(b_R) > 0.
\]
Thus we have that $\tau$ is initially increasing and is strictly quasi-convex in $b_R$, so we can conclude that $\tau(b_L, b_R)$ strictly increases in $b_R$.  Thus $r^*$ \emph{strictly decreases} in $b_R$. Note, in addition, that this implies that $\tau(b_L, b_R)>1$ for all $b_R>b_L$ and so we can conclude in fact that $r^*<1/2$.  %RV2:  Added last sentence, given strict q-concave shows r*<1/2.

Now we turn to the comparative statics with respect to $b_L$.  Since $\tau(b_L, b_R) \geq 1$ it follows that $\tau$ is decreasing in $b_L$ in a neighborhood of $b_R$.  We now show that $\tau(b_L, b_R)$ is decreasing in $b_L$ for all $b_L<b_R$.  For this, it is sufficient to show that $\tau$ is quasi-convex in $b_L$. Similar to above, differentiating with respect to $b_L$: 
\begin{align*}
\frac{\partial \tau(b_L, b_R)}{\partial b_L} \propto &\int_{-b_R}^{-b_L}B(-p+\gamma+b_L)g(\gamma)\,d\gamma\times\left(B(-p-b_R+b_L)(-1)g(-b_L)\right)\\
&- \int_{-b_R}^{-b_L}B(-p-\gamma-b_R)g(\gamma)\,d\gamma\times \left[-g(-b_L)B(-p)+\int_{-b_R}^{-b_L}B'(-p+\gamma+b_L)g(\gamma)d\gamma\right]\\
%=&g(-b_L)\left[ B(-p) \int_{-b_R}^{-b_L}B(-p-\gamma-b_R)g(\gamma)\,d\gamma-B(-p-b_R+b_L)\int_{-b_R}^{-b_L}B(-p+\gamma+b_L)g(\gamma)\,d\gamma\right] \\
%&- \int_{-b_R}^{-b_L}B(-p-\gamma-b_R)g(\gamma)\,d\gamma \times \int_{-b_R}^{-b_L}B'(-p+\gamma+b_L)g(\gamma)d\gamma \\
 \propto & g(-b_L)(B(-p)-\tau(b_L, b_R)^{-1}B(-p-b_R+b_L))-\int_{-b_R}^{-b_L}B'(-p+\gamma+b_L)g(\gamma)\,d\gamma \\
= &-[g(-b_L) \tau(b_L, b_R)^{-1}+g(-b_R)]B(-p-b_R+b_L)+\int_{-b_R}^{-b_L}B(-p+\gamma+b_L)g'(\gamma)\,d\gamma \\
 \equiv & Y(b_L).
%\ge \,&g(-b_R)(\tau(b_R)B(-p-b_R+b_L)+B(-p+b_L-b_R))\\
%\ge &0.
\end{align*}
As in the previous case, the last equality follows from integration by parts. 

Since $\frac{\partial \tau}{\partial b_L}=0$ if and only if $Y(b_L)=0$, to show that $\tau$ is quasi-convex in $b_L$ it is sufficient to show that $Y'(b_L)>0$ when $\frac{\partial \tau}{\partial b_L}=0$.  Differentiating, using that $Y=0$, and following similar steps as above, 
\begin{align*}
Y'(b_L) =&-[g(-b_L) \tau(b_L, b_R)^{-1}+g(-b_R)]B'(-p-b_R+b_L)+g'(-b_L)\tau(b_L, b_R)^{-1}B'(-p-b_R+b_L) \\
&-B(-p)g'(-b_L)+\int_{-b_R}^{-b_L}B'(-p+\gamma+b_L)g'(\gamma)\,d\gamma \\
=&g'(-b_L)[\tau(b_L, b_R)^{-1}B'(-p-b_R+b_L)-B(-p)] \\
&-(g(-b_L) \tau(b_L, b_R)^{-1}+g(-b_R))B'(-p-b_R+b_L)+\int_{-b_R}^{-b_L}B'(-p+\gamma+b_L)g'(\gamma)\,d\gamma \\
> &-[g(-b_L) \tau(b_L, b_R)^{-1}+g(-b_R)]B'(-p-b_R+b_L)+\int_{-b_R}^{-b_L}B'(-p+\gamma+b_L)g'(\gamma)\,d\gamma \\
=&-\frac{B'(-p-b_R+b_L)}{B(-p-b_R+b_L)}\int_{-b_R}^{-b_L}B(-p+\gamma+b_L)g'(\gamma)\,d\gamma+\int_{-b_R}^{-b_L}B'(-p+\gamma+b_L)g'(\gamma)\,d\gamma \\
\geq & 0.
\end{align*}
We conclude that $\tau(b_L, b_R)$ decreases in $b_L<b_R$, so $r^*$ \emph{increases} in $b_L$. 

Finally, we turn to comparative statics with respect to $p$. The derivative of \eqref{eq:condnob} with respect to $p$ is:
\begin{equation}
\int_{-b_R}^{-b_L}(rb(-p-\gamma-b_R)-(1-r)b(p-\gamma-b_L))g(\gamma)\,d\gamma\label{cp:eq},
\end{equation}
which strictly increases in $r$. Evaluating \eqref{cp:eq} at $r=\frac{1}{2}$, we find that 
\begin{align*}
\frac{1}{2}\int_{-b_R}^{-b_L} &[b(-p-\gamma-b_R)-b(p-\gamma-b_L)]g(\gamma)\,d\gamma \nonumber\\
\le &\frac{1}{2(b_R-b_L)}\left(\int_{-b_R}^{-b_L}[b(-p-\gamma-b_R)-b(p-\gamma-b_L)]d\gamma\right)=0.
\end{align*}
This follows from Chebyshev's integral inequality because $b(-p-\gamma-b_R)$ increases in $\gamma\in[-b_R,-b_L]$, $b(p-\gamma-b_L)$ decreases in $\gamma\in[-b_R,-b_L]$, and $g(\gamma)$ decreases in $\gamma\in[-b_R,-b_L]$. We conclude that $r^*$ strictly decreases in $p$. $\square$

\noindent\textbf{Proof of \autoref{prop:bp}.} When $b_R \geq 0$, by \eqref{obj:BR}, Right's net benefit from a non-binding referendum, $\tilde{D}(r, b_L, b_R, p)$, is proportional to
\begin{align}
\int_{\gamma \notin [-b_R, -b_L]}&(r B(-p-\gamma-b_R)-(1-r)B(-p+\gamma+b_L))g(\gamma)\,d\gamma.%\nonumber\\
%-&\int_{-b_R}^{-b_L}(r B(-p-\gamma-b_R)-(1-r)B(-p+\gamma+b_L))g(\gamma)\,d\gamma>0
\label{eq:thatcon}
\end{align}
\eqref{eq:thatcon} strictly increases in $r$, is strictly negative when $r=0$, and strictly positive when $r=1$. We conclude that there exists $r^{**}\in(0,1)$ such that \eqref{eq:thatcon} equals $0$ and that Right strictly benefits from a referendum if and only if $r>r^{**}$.%The difference of the net benefit of a referendum in the binding versus the non-binding case when $b_R\ge 0$ is:

%RV2:  Added more discussion to explain what we're doing.
We now show that there exists $b_R^{\dagger}>-b_L$ such that if $b_R\in(-b_L,b_R^{\dagger})$, then $r^{**}>r_{\text{bind}}$, and a  $b_R^{\ddagger}<-b_L$ such that $b_R\in(b_R^{\ddagger},-b_L)$ implies $r^{**}<r_{\text{bind}}$.   To do so we evaluate $\tilde{D}(r, b_L, b_R, p)$ when $r=r_\text{{bind}}$: if it is strictly negative then $r^{**}>r_{\text{bind}}$ and if it is strictly positive then $r^{**}<r_{\text{bind}}$.

Evaluating \eqref{eq:thatcon} at $r_{\text{bind}}$, at which point \eqref{eq:condn} equals $0$, we get that $\tilde{D}(r_\text{{bind}}, b_L, b_R, p) \propto \Delta(b_R)$, where 
\[
\Delta(b_R)\equiv \int_{-b_R}^{-b_L}((1-r)B(-p+\gamma+b_L)-r B(-p-\gamma-b_R))g(\gamma)\,d\gamma.
\]
Notice that $\Delta(-b_L)=0$ when $r=1/2$, and so $r^{**}=r_{\text{bind}}=1/2$ when $b_R=-b_L$. Further:
\begin{align*}
\frac{\partial \Delta(b_R)}{\partial b_R}\bigg|_{b_R=-b_L,r=r_{\text{bind}}}\propto & -[B(-p)-B(-p+2b_L)] g(b_L)+\int_{b_L}^{-b_L}B'(-p-\gamma+b_L)g(\gamma)\,d\gamma \\
= & \int_{b_L}^{-b_L}B(-p-\gamma+b_L)g'(\gamma)\,d\gamma,
\end{align*}
where the last inequality follows from integration by parts.  %RV2:  Re-wrote here.
%Integration by parts yields:
%\[
%\int_{b_L}^{-b_L}B'(-p-\gamma+b_L)g(\gamma)\,d\gamma=[B(-p)-B(-p+2b_L)]g(b_L)+\int_{b_L}^{-b_L}B(-p-\gamma+b_L)g'(\gamma)\,d\gamma.
%\]
%\begin{align*}
%\frac{\partial \Delta(b_R)}{\partial b_R}\bigg|_{b_R=-b_L,r=r^{**}}\propto &[B(-p)-B(-p+2b_L)] g(b_L)-\int_{b_L}^{-b_L}B'(-p-\gamma+b_L)g(\gamma)\,d\gamma,
%\end{align*} 
%and thus 
%\[
%\frac{\partial \Delta(b_R)}{\partial b_R}\bigg|_{b_R=-b_L,r=r^{**}}>0\iff \int_{b_L}^{-b_L}B(-p-\gamma+b_L)g'(\gamma)\,d\gamma<0.
%\]
Furthermore, 
\begin{align*}
\int_{b_L}^{-b_L}B(-p-\gamma+b_L)g'(\gamma)\,d\gamma=&\int_{b_L}^{0}B(-p-\gamma+b_L)g'(\gamma)\,d\gamma+\int_{0}^{-b_L}B(-p-\gamma+b_L)g'(\gamma)\,d\gamma\\
< &\int_{b_L}^{0}B(-p-\gamma+b_L)g'(\gamma)\,d\gamma+B(-p+b_L)\int_{0}^{-b_L}g'(\gamma)\,d\gamma\\
=&\int_{b_L}^{0}[B(-p-\gamma+b_L)-B(-p+b_L)]g'(\gamma)\,d\gamma\\
<&0.  
\end{align*}

Therefore $\tilde{D}(r_{\text{bind}}, b_L, b_R, p)$ is negative for a neighborhood of $b_R>-b_L$ and positive for a neighborhood of $b_R<-b_L$.  We can thus conclude that there exists $b_R^{\dagger}>-b_L>b_R^{\ddagger}$ such that if $b_R\in(-b_L,b_R^{\dagger})$, then $r^{**}>r_{\text{bind}}$ and $b_R\in(b_R^{\ddagger},-b_L)$ implies $r^{**}<r_{\text{bind}}$.
$\;\square$

\noindent\textbf{Proof of \autoref{ref:ladder}.} 
That the effect of a referendum on congruence is ambiguous when $b_R<0$ is illustrated in \autoref{fig:g}.  

We prove that a referendum can only improve congruence when $b_R\ge 0$.  %Define $\tilde{\mu}\equiv \frac{\mu}{2(1-\mu)}$, and
Recall that $y=0$ is the majority preferred policy if and only if $\gamma<\gamma^{*}$, defined as the cutoff in which \eqref{gammas} holds with equality. If no referendum is held then Right wins an implements $y=1$ with probability $\lambda(s(\gamma;r,b_R,b_L,p))$, where $\lambda(\cdot)$ and $s(\cdot)$ are defined in expression \eqref{lambda} and \eqref{eq:noref} respectively.   
%RV2:  Added blurb above, re-wrote above in terms of $\lamda$ and $s$.  Original commented out.
The probability that the majority-preferred emerging policy is implemented is:
\[
\int_{-\infty}^{\gamma^*}(1-\lambda(s(\gamma;r,b_R,b_L,p))g(\gamma)\,d\gamma+\int_{\gamma^*}^{\infty}\lambda(s(\gamma;r,b_R,b_L,p))g(\gamma)\,d\gamma.
\]
If a referendum is held both parties converge on the majority preferred policy if $\gamma<-b_R$ or $\gamma > -b_L$ so the corresponding probability is
\begin{align*}
\int_{-b_R}^{\gamma^*}(1-\lambda(s(\gamma;r,b_R,b_L,p))g(\gamma)\,d\gamma
+\int_{\gamma^*}^{-b_L}\lambda(s(\gamma;r,b_R,b_L,p))g(\gamma)\,d\gamma+1-G(-b_L)+G(-b_R), 
\end{align*}
and the net change in probability the majority-preferred policy on the second issue is implemented is
\[
\int_{-\infty}^{-b_R} \lambda(s(\gamma;r,b_R,b_L,p)g(\gamma)\,d\gamma+\int_{-b_L}^{\infty}(1-\lambda(s(\gamma;r,b_R,b_L,p))g(\gamma)\,d\gamma >0. \square
\]
%The probability that the majority-preferred emerging policy is implemented is:
%\[
%\int_{-\infty}^{\gamma^*}(.5-\tilde{\mu}(2\Delta(r,\gamma)-1))g(\gamma)\,d\gamma+\int_{\gamma^*}^{\infty}(.5+\tilde{\mu}(2\Delta(r,\gamma)-1))g(\gamma)\,d\gamma,
%\]
%Recall that after a referendum, the corresponding probability is
%\begin{align*}
%\int_{-b_R}^{\gamma^*}&\left(.5-\tilde{\mu}[2\Delta (r,\gamma)-1]\right)g(\gamma)\,d\gamma\\
%+&\int_{\gamma^*}^{-b_L}\left(.5+\tilde{\mu}[2\Delta (r,\gamma)-1]\right)g(\gamma)\,d\gamma +1-G(-b_L)+G(-b_R).
%\end{align*}
%So, the net change in probability the the majority-preferred policy on the second issue is implemented after any referendum is
%\[
%\int_{-\infty}^{-b_R}(.5+\tilde{\mu}(2\Delta(r,\gamma)-1))g(\gamma)\,d\gamma+\int_{\gamma^*}^{\infty}(.5-\tilde{\mu}(2\Delta(r,\gamma)-1))g(\gamma)\,d\gamma\ge 0. \square
%\]

\newpage 
\setcounter{equation}{0}

\section{Referendum and Third Party Threat}

Our benchmark model considers two-party electoral competition. Yet referendums are frequently deployed in contexts where one of the major parties' core supporters are especially vulnerable to being poached by a third party. For example, the UK's 2016 Brexit referendum was partly motivated by ongoing Conservative losses to the United Kingdom Independence Party (UKIP)---a right-leaning third party focused on leaving the European Union. 

We now extend our benchmark model to include another party, \emph{Third}. Third has a fixed policy of $x=1$ and $y=1$: this means that it shares Right's fixed policy on the traditional issue, but favors $y=1$. Third also has deterministic valence $v$, reflecting the utility difference to all voters from electing Third rather than one of the traditional parties.  We continue to focus on the non-binding context in which the two ``major'' parties, Left and Right, implement their supporters' (expected) majority-preferred policy on each issue.  Third maintains policy $y=1$ no matter what is revealed in the referendum.  %RV2:  Added after the comma on valence since otherwise I don't think we precisely define the utility function.

We focus on the case in which $b_L<b_R<0$, so a majority of each major party's core supporters are expected to favor $y=0$. This means that Third is initially the only party with policy $y=1$. We continue to assume that a fraction $\eta$ of noise voters support Right in a general election, and the remaining fraction $1-\eta$ support Left. Finally, we assume $v<0$, so that Third has a valence disadvantage versus the major parties. The magnitude of $v$ is not important, but $v<0$ ensures that if Right and Third offer the same policy, voters strictly prefer Right to Third. We maintain all other parameter and distributional assumptions from our benchmark model.

\noindent\textbf{Analysis.} Absent a referendum, Left and Right divide solely on the traditional policy. Among the share $\mu$ of policy voters, a fraction $r\in(0,1)$ are conservatives who prefer Right's (and therefore also Third's) traditional policy, and the remaining $1-r$ are liberals who prefer Left's traditional policy. As the major parties align on the second issue no policy voter will support the other party, but they may support Third.  A conservative voter $i$ supports Right if $0\ge b^i+v$; otherwise, she supports Third. Similarly, a liberal voter $i$ supports Left if $p\ge b^i+v$; otherwise, she supports Third.

\noindent We conclude that Right's total votes are:
\[
\mu r B(-v-b_R-\gamma)+(1-\mu)\eta,
\]
Left's total votes are:
\[
\mu(1-r) B(p-v-b_L-\gamma)+(1-\mu)(1-\eta),
\]
and Third's total votes are:
\[
\mu[r(1-B(-v-b_R-\gamma))+(1-r)(1-B(p-v-b_L-\gamma))].
\]
It follows that Right defeats Third if and only if
\[
\eta \ge \frac{\mu}{1-\mu}\left[r(1-2 B(-v-\gamma-b_R))+(1-r)(1-B(p-v-b_L-\gamma))\right],
\]
and that Right defeats Left if and only if
\[
\eta\ge \frac{1}{2}+\frac{\mu}{2(1-\mu)}\left[(1-r)B(p-v-b_L-\gamma)-r B(-v-b_R-\gamma)\right].
\]
Hence the probability Right receives more votes than Left, for any $\gamma$ realization, is 
\begin{equation}
\label{e:LR}
\hat{\lambda}(\gamma)=\frac{1}{2}-\frac{\mu}{2(1-\mu)}\left[(1-r)B(p-v-b_L-\gamma)-r B(-v-b_R-\gamma)\right].
\end{equation}  

Notice that Right loses a larger share of its core supporters to Third than Left for two reasons: $b_R>b_L$ implies a larger share of Right's core supporters prefer Third's policy of $y=1$, and also Left's core supporters mis-align with Third's traditional policy (reflected in $p>0$).  

This does not necessarily mean Third lowers Right's prospects relative to Left, however.  If Right is a small enough minority, even though they lose a larger \emph{fraction} of their voters they may expect to lose fewer \emph{total} voters to Third.  From \eqref{e:LR}, combined with the election probability in a two party election given by \eqref{lambda}, Right is less likely to finish ahead of Left in a three party race if and only if
\begin{equation}
\label{worse}
(1-r) \int_{-\infty}^{\infty} B(-p+v+b_L+\gamma) g(\gamma) d\gamma < r\int_{-\infty}^{\infty} B(v+b_R+\gamma) g(\gamma) d\gamma.
\end{equation}
Inequality \eqref{worse} holds if $r$ is not too small, and so Right has enough voters to potentially lose, or if polarization ($p$) is sufficiently high that Third attracts few voters from Left.  In particular, it always holds when $r \geq 1/2$ so Right is in the majority.  If \eqref{worse} holds, Right's electoral prospects are clearly harmed by the presence of Third.  Whether Left benefits depends on how likely Third is to defeat them.  Conversely if \eqref{worse}  is violated and Third is sufficiently unlikely to win---a sufficient condition is that $\mu$ is small---then Right's electoral prospects are enhanced by Third.

What about the incentive to hold a referendum?  If a referendum is held and reveals $\gamma< -b_R$,  it has no effect on election probabilities, since the parties' policy commitments remain the same. 

If a referendum is held and reveals $\gamma\ge -b_L$, both major parties' platforms converge on the second issue to $y=1$. Third wins zero votes, and Right wins with probability $\lambda(r)\equiv \frac{1}{2}+\frac{\mu}{1-\mu}\left[r-\frac{1}{2}\right]$
 as in a single-issue election in the baseline model.

If a referendum is held and reveals $\gamma\in[-b_R,-b_L)$, Right's policy becomes $y=1$, while Left's remains $y=0$, so that the major parties divide on both the traditional and the second issue. Since Third's valence $v<0$ is negative, it wins no votes.  Voters then divide across the two major parties as in our benchmark model, and Right's winning probability is
\[
\lambda(s(\gamma;r,b_R,b_L,p))=\frac{1}{2}+\frac{\mu}{1-\mu}\left[r B(p+\gamma+b_R)+(1-r) B(\gamma+b_L-p)-\frac{1}{2}\right].
\]

The above discussion has important implications when \eqref{worse} holds.  If the parties align on $y=0$ absent a referendum, Right has a greater incentive to initiate a referendum due to the Third party threat.  When $r>1/2$ Right would never initiate a referendum when $b_R<0$ absent a third party challenge, but may do so in the presence of Third.  In particular, Right always benefits from a referendum if $b_L \approx b_R$ and so Left and Right are unlikely to diverge on the second issue. 

\begin{remark}
\label{brexit}
If $r \geq 1/2$ and $b_L<b_R<0$, Right always benefits from a Referendum whenever $b_R-b_L$ is sufficiently small.
\end{remark}

This highlights how the threat of Third can induce a majority Right party to initiate a Referendum.  If the Referendum reveals that $\gamma$ is large enough that Right chooses $y=1$, the threat of Third is eliminated.  Consistent with this, after Brexit was initiated by the majority Conservative party, UKIP was effectively eliminated and the Conservatives enjoyed considerable electoral success.

However it is not the case that Right always benefits from a referendum. While a general analysis is complicated, since a referendum can change the winning probability for all three parties, we can give a full characterization by focusing on the following parameters.  First we assume that $\mu<2/3$, so the share of noise voters is not too small.  Second, we parameterize a scale family of distributions of voter types within each party, $B_{\sigma}\equiv B(z/\sigma)$ for $z\in\mathbb{R}$, and focus on the case in which $\sigma$ is high, and so preferences on the second issue are dispersed. 

Together, these two assumptions ensure that Third cannot win the election: since noise voters only support the major parties, $\mu<2/3$ means that Third can only succeed in winning the election if they unite a majority of policy voters in both parties, which isn't possible when voter preferences on the second issue are very dispersed.  It is then only necessary to determine if Right's chances are enhanced against Left, which is characterized in the next proposition.

\begin{proposition}\label{Pro:third} Suppose $\mu<2/3$. There exist $b_L^*<0$ and $b_R^* \in (b_L^*, 0)$ and $\sigma^*$ such that when $\sigma>\sigma^*$, 

\noindent (1)  \emph{advantaged} Right ($r>1/2$) strictly prefers to hold a referendum if $b_L>b_L^*$ or $b_R<b_R^*$, but strictly prefers not to if $b_L<b_L^*$ and $b_R^*<b_R<0$.

\noindent (2) \emph{disadvantaged} Right ($r<1/2$) strictly prefers to hold a referendum if $b_L<b_L^*$ and $b_R^*<b_R<0$ but strictly prefers not to if $b_L>b_L^*$ or $b_R<b_R^*$.

\end{proposition}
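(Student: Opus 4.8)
The plan is to build on the characterization already developed for the three-party model: absent a referendum, Right finishes ahead of Left with probability $\int_{-\infty}^{\infty}\hat{\lambda}(\gamma)g(\gamma)\,d\gamma$, where $\hat{\lambda}$ is given in \eqref{e:LR}; with a referendum, the outcome depends on the realization of $\gamma$ in the three regions $\gamma<-b_R$ (no effect), $\gamma\in[-b_R,-b_L)$ (major parties divide on both issues, Right wins with probability $\lambda(s(\gamma;r,b_R,b_L,p))$), and $\gamma\ge-b_L$ (major parties converge on $y=1$, Third is eliminated, Right wins with probability $\lambda(r)$). Under $\mu<2/3$ and $\sigma>\sigma^*$, Third never wins the election, so the only relevant comparison is Right-versus-Left. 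Define the net benefit of a referendum to Right as
\[
\Phi(r,b_L,b_R,p,\sigma)\equiv\int_{-b_R}^{-b_L}\big(\lambda(s(\gamma;r,b_R,b_L,p))-\hat{\lambda}(\gamma)\big)g(\gamma)\,d\gamma+\int_{-b_L}^{\infty}\big(\lambda(r)-\hat{\lambda}(\gamma)\big)g(\gamma)\,d\gamma.
\]
The first step is to show $\Phi$ is monotone in $r$ in the relevant sense; differentiating $\hat{\lambda}$, $\lambda$, and $s$ with respect to $r$ and collecting terms, $\partial\Phi/\partial r$ has a sign that is constant on $(0,1)$ for fixed $(b_L,b_R,p,\sigma)$, so the sign of $\Phi$ flips exactly once at $r=1/2$ (by construction of the normalization in which Right in the majority is already known to benefit) — this yields the "advantaged versus disadvantaged" dichotomy and reduces everything to signing $\Phi$ at, say, $r=1/2$, or more precisely to showing that at $r=1/2$ the sign of $\Phi$ is governed by a function of $(b_L,b_R)$ alone in the large-$\sigma$ limit.

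The second and central step is to take the limit $\sigma\to\infty$. Writing $B_\sigma(z)=B(z/\sigma)$, we have $B_\sigma(z)\to B(0)=1/2$ pointwise and $B_\sigma'(z)=\sigma^{-1}b(z/\sigma)\to 0$, so the integrands degenerate; the informative content sits in the first-order expansion $B_\sigma(z)=\tfrac12+\sigma^{-1}b(0)\,z+o(\sigma^{-1})$. Substituting this expansion into $\hat{\lambda}$, $s$, and hence $\lambda(s(\cdot))$, and multiplying $\Phi$ by $\sigma$, I expect the leading term $\sigma\Phi$ to converge to an explicit expression that is linear in the "effective shifts" $b_R+\gamma$ and $b_L-\gamma$ (through the $p,v$ offsets) integrated against $g$ over the two regions. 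The boundaries of integration, $-b_R$ and $-b_L$, stay fixed as $\sigma$ varies, which is what makes this limit clean. The resulting limit expression should be continuous and strictly monotone in each of $b_L$ and $b_R$ — increasing in $b_L$ and, after accounting for both the $[-b_R,-b_L)$ region and the elimination-of-Third region, of definite sign structure in $b_R$ — so that its zero set is (locally) a pair of thresholds $b_L^*$ and $b_R^*\in(b_L^*,0)$, with $\sigma\Phi>0$ precisely when $b_L>b_L^*$ or $b_R<b_R^*$. Finally, by uniform convergence on compacta (the densities $b$ and $g$ are bounded and continuous, and all integrands are uniformly bounded), there exists $\sigma^*$ such that for $\sigma>\sigma^*$ the sign of $\Phi$ agrees with the sign of its limit, delivering statements (1) and (2) exactly as written — with (2) the mirror image of (1) because $\Phi$ changes sign at $r=1/2$.

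\textbf{Main obstacle.} The delicate part is not the $\sigma\to\infty$ asymptotics per se but establishing that the limit function of $(b_L,b_R)$ genuinely has the claimed threshold geometry — i.e., that $\{\sigma\Phi_\infty=0\}$ is exactly the curve separating "$b_L>b_L^*$ or $b_R<b_R^*$" from "$b_L<b_L^*$ and $b_R^*<b_R<0$", rather than some more complicated region. This requires showing the limit is strictly monotone in $b_R$ (the sign here is subtle: raising $b_R$ both shrinks the divide-on-both-issues region from the left, widening the Third-eliminated region, and worsens Right's within-region position, and these must be shown to net out monotonically) and strictly monotone in $b_L$, and then checking that along the boundary the two thresholds are ordered $b_L^*<b_R^*<0$. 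I would handle the $b_R$-monotonicity by differentiating the limit expression, using integration by parts to move derivatives off $g$ onto the boundary terms (as in the proofs of Propositions~\ref{prop:bn} and \ref{prop:bp}), and invoking strict quasi-concavity and symmetry of $g$ together with $p>0$, $v<0$ to pin down the sign; the ordering $b_L^*<b_R^*$ should then follow from evaluating the limit at $b_R=b_L$ (where the divide-region vanishes) and at $b_R\to0^-$.
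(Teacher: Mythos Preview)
Your setup is right: the net benefit $\Phi$ (the paper calls it $\Gamma$), the three $\gamma$-regions, and the observation that $\mu<2/3$ together with large $\sigma$ force Third's vote share below one third so that only the Right-versus-Left comparison matters --- all of this matches the paper. The problem is in your central computation.

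You claim that as $\sigma\to\infty$ ``the integrands degenerate'' and that ``the informative content sits in the first-order expansion $B_\sigma(z)=\tfrac12+\sigma^{-1}b(0)z+o(\sigma^{-1})$'', which leads you to multiply by $\sigma$ and chase a limit of $\sigma\Phi$. This is a mistake: the integrands do \emph{not} degenerate to zero. Substituting the zeroth-order limit $B_\sigma\to\tfrac12$ directly into the two integrands (the combination of $\lambda(s)$, $\hat\lambda$, and $\lambda(r)$ does not cancel) gives the constants $\tfrac12-r$ on $[-b_R,-b_L)$ and $r-\tfrac12$ on $[-b_L,\infty)$. Hence $\Gamma$ itself has a finite, nonzero limit whenever $r\neq\tfrac12$:
\[
\lim_{\sigma\to\infty}\Gamma\;\propto\;\left(r-\tfrac12\right)\bigl[\,1-2G(-b_L)+G(-b_R)\,\bigr].
\]
Your object $\sigma\Phi$ would diverge, and the first-order expansion is never needed. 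A related casualty is your first step: you assert that for finite $\sigma$ the sign of $\Phi$ ``flips exactly once at $r=1/2$'', but $\Phi(1/2,\cdot)$ is not zero for finite $\sigma$ (the $B_\sigma$ terms in $\hat\lambda$ at $r=1/2$ do not vanish). The factorization above is what delivers the advantaged/disadvantaged dichotomy, and it only emerges \emph{after} passing to the limit.

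Once you have this factorization, your ``main obstacle'' dissolves entirely. The limit function $\phi(b_L,b_R)=1-2G(-b_L)+G(-b_R)$ is trivially strictly increasing in $b_L$ and strictly decreasing in $b_R$ because $G$ is a strictly increasing CDF --- no integration by parts, no quasi-concavity of $g$, and neither $p$ nor $v$ appear at this order. The thresholds fall out from two evaluations: $\phi(b_L,b_L)=1-G(-b_L)>0$, while $\phi(b_L,0)=2G(b_L)-\tfrac12$, positive iff $b_L>G^{-1}(1/4)\equiv b_L^*$. For $b_L<b_L^*$, monotonicity in $b_R$ gives a unique root $b_R^*$. So the ``subtle $b_R$-monotonicity'' you worry about, with competing effects that ``must be shown to net out monotonically'', is an artifact of looking at the wrong (first-order) limit.
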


\begin{proof} First, we show that if $\mu<2/3$, there exists $\sigma^*$ such that Third can never win the election when $\sigma>\sigma^*$.  This follows because Third's vote share as $\sigma \rightarrow \infty$ is
\begin{align*}
\lim_{\sigma\rightarrow\infty} \mu[r(1-B_{\sigma}(-v-b_R-\gamma))+(1-r)(1-B_{\sigma}(p-v-b_L-\gamma))] &=\frac{\mu}{2}<\frac{1}{3}.
\end{align*}

As such, Right's net benefit from a referendum is
\[
\Gamma(b_R,b_L,r) \equiv \int_{-b_R}^{-b_L}[\lambda(s(\gamma;r,b_R,b_L,p))-\hat{\lambda}(\gamma; r, b_L, b_R, p)] g(\gamma)d\gamma\\
+\int_{-b_L}^{\infty} [\lambda(r)-\hat{\lambda}(\gamma; r, b_L, b_R, p)]g(\gamma)d\gamma.
\]
$\Gamma(b_R,b_L,r)$ is thus proportional to 
\begin{align*}
& \int_{-b_R}^{-b_L}\left\{
\begin{array}{c}
2r B_{\sigma}(p+\gamma+b_R)+2(1-r) B_{\sigma}(\gamma+b_L-p)-1\\
+(1-r)B_{\sigma}(p-v-b_L-\gamma)-r B_{\sigma}(-v-b_R-\gamma) 
\end{array}
\right\} g(\gamma)d\gamma \\
&+\int^{\infty}_{-b_L}\left\{ 2r-1+(1-r)B_{\sigma}(p-v-b_L-\gamma)-r B_{\sigma}(-v-b_R-\gamma) \right\}g(\gamma)d\gamma.
\end{align*}
And so
\begin{align*}
\lim_{\sigma \rightarrow \infty} \Gamma(b_R,b_L,r) &\propto  \int_{-b_R}^{-b_L}\left\{\frac{1}{2}-r\right\}g(\gamma)d\gamma+\int^{\infty}_{-b_L}\left\{ r-\frac{1}{2} \right\}g(\gamma)d\gamma \\
&=\left(r-\frac{1}{2}\right)\left[1-2G(-b_L)+G(-b_R) \right].
\end{align*}
It then follows when $r>1/2$ that Right benefits from initiating a Referendum if and only if $1-2G(-b_L)+G(-b_R)>0$, and if $r<1/2$ Right benefits if and only if $1-2G(-b_L)+G(-b_R)<0$. The final step to complete the proof is to determine when $1-2G(-b_L)+G(-b_R)\equiv \phi(b_L,b_R)$ is positive.

Notice that $\phi(b_L,b_R)$  strictly increases in $b_L$ and strictly decreases in $b_R$.  Moreover, $\phi(b_L,b_L)>0$ and $\phi(b_L,0)=1-2G(-b_L)+1/2=2G(b_L)-1/2$, which is strictly positive if and only if $b_L>G^{-1}(1/4) \equiv b_L^*$.  Thus if $b_L >b_L^*$ then $\phi(b_L,b_R)>0$ for all $b_R \in (b_L, 0)$.  If $b_L<b_L^*$ then $\phi(b_L, b_R)>0$ if and only if $b_R< b_R^*$, where $b_R^* \in (b_L^*, 0)$ solves $\phi(b_L, b_R^*)=0$.  \end{proof}

To understand the result, note that in the absence of a referendum there is no exchange of core supporters between the major parties but both major parties cede some of their core supporters that favor $y=1$ to Third.   When $\sigma$ gets large, and so preferences in each party are very dispersed, each party loses roughly the same share of its voters to Third, and so \eqref{worse} reduces to $r>1/2$.  That is, it is the larger party who is harmed by Third's presence. 

What is the effect of a referendum?  If it reveals $\gamma>-b_L$ then the major parties converge at $y=1$, Third receives no votes, and each policy voter supports their own party.  Thus if $\gamma>-b_L$, the majority party benefits from a referendum.  Conversely, if it reveals $\gamma\in[-b_R,-b_L)$ the major parties divide on the second issue.  Third again receives no votes, but now rather than losing a fraction of their support to Third, parties lose a fraction of their core supporters to the \emph{other major party}.  As losing votes to their main competition is worse than to a non-competitive third party, a referendum that divides the parties is worse for the majority party.  Thus, a majority party benefits from a referendum if and only if $\gamma>-b_L$ is sufficiently likely relative to $\gamma\in[-b_R,-b_L)$.

If Right is significantly more divided than Left (corresponding to $b_L<b_L^*$ and $b_R^*<b_R<0$), the referendum is likely to mis-align the parties on the second issue. When conservatives are in the majority, Right benefits from winning its core supporters back from Third, but doing so is a pyrrhic victory if the party loses a large share of its supporters instead to Left because they prefer $y=0$. In that context, it is better to accept Third's threat rather than risk a multi-issue conflict between the two major parties.

\newpage
\setcounter{equation}{0}

 \section{Direct Democracy and Voter Mobilization}  

In this extension, we explore how politicians can use direct democracy to stimulate turnout in elections that occur simultaneously.  We amend our benchmark model as follows. If a referendum is held, it occurs simultaneously with the election. We model turnout by way of a standard costly voting heuristic: each policy voter $i$ has an idiosyncratic voting cost $c^i$, distributed uniformly on $[0,\bar{c}]$ and independent of preference type $(x^i, b^i)$. Each policy voter casts a ballot if and only if the utility difference between her most preferred policy (or policies if both issues are on the ballot)  exceeds this voting cost.   Absent a referendum this utility difference is $p$, but if a referendum is being held the total utility difference from the two issues is $p+|b^i|$.
%RV2:  Made cost of voting c not u.  Then used u not x in change of variables below (x of course traditional issue).  Tweaked above discussion slightly.  Also explicitly stated c independent of policy preferences.

Since voting costs have bounded support, we further assume that the distribution of preference types $B(\cdot)$ has support $[-\sigma,\sigma]$, and that the distribution of shock $\gamma$ has support $[-\kappa,\kappa]$.\footnote{While the baseline model assumes unbounded support, all results extend with bounded support as long as $\sigma$ and $\kappa$ are sufficiently large.} We assume  $\sigma$, $\kappa$, and $\bar{c}$ are large in order to avoid corner solutions and ensure some fraction of every voter preference type $(x^i,b^i)$ turns up to vote:

\noindent\textbf{Assumption C.1.} $\bar{c}>p+\kappa+\sigma$, $\kappa>\max\{|b_R|,|b_L|\}$, and $\sigma>p+\kappa$.

\noindent The interaction proceeds as follows.

\noindent 0. Nature draws $\gamma$ and $\eta$. Neither of these realizations are observed by any agent.

\noindent 1. \emph{Right}'s leadership chooses whether to hold a referendum on the second issue.

\noindent 2. The general election takes place. If Right chooses to hold a referendum on the second issue, a binding referendum occurs simultaneously with the general election.

\noindent 3. The majority-winning party implements policy on the traditional issue.

As in our benchmark, each party always implements its fixed policy on the partisan issue, i.e., Right implements $x=1$ and Left implements $x=0$. We assume that $y=0$ is a ``status-quo'' policy that is implemented in the absence of any referendum.  This means that parties do not choose policy on the second issue absent a referendum, they are both constrained to the status quo.\footnote{This is one difference with the baseline model, and so applies to cases in which a referendum is a legal requirement to change policy---e.g., a constitutional amendment requiring ratification by referendum. We could alternatively assume parties are free to choose either $y=0$ or $y=1$ absent a referendum. If $b_L>0$, or if $b_R<0$, the parties align absent a referendum and the extension's analysis is unchanged.  If $b_L<0 \leq b_R$, preferences on the second issue would be relevant for turnout regardless of a referendum. A referendum unbundles the two issues as discussed in the previous literature.  Unbundling also increases turnout, since a voter participates if and only if $p+|b^i|\ge c^i$ after a referendum, but turns out if and only if $|p+b^i|\ge c^i$ in the absence of a referendum.}  We focus on a binding referendum on the second issue, so the majority-winning party after the election always implements the policy that commands majority support in the referendum.  Since the referendum and general are simultaneous, candidates would not have a chance to reposition after a non-binding referendum. 

\noindent\textbf{Analysis.} Absent a referendum, a policy voter casts a ballot for her preferred party if and only if her net value from her most-preferred partisan policy $p$ exceeds $c^i$. Thus, the fraction of policy voters that turn out for Right is $r p/\bar{c}$, and the fraction that turns out for Left is $(1-r)p/\bar{c}$. Right's probability of winning is therefore
\[
\frac{1}{2}+\frac{\mu}{1-\mu}\frac{p}{\bar{c}}\left(r-\frac{1}{2}\right).
\]
If a Referendum is held, a policy voter with net value $b^i$ from $y=1$ votes if and only if $p+|b^i|\ge c^i$. If she turns out, she always votes for the party whose traditional policy she prefers, and she votes sincerely on the second issue.  

For each voter $i$ in party $J$ define 
\[
u \equiv b^{i}-b_J-\gamma
\]
which is then symmetrically distributed from $[-\sigma, \sigma]$.  
The share of policy voters in the Right party who turn out for any $\gamma$ realization is 
\[
\frac{1}{\bar{c}} \int_{-\sigma}^{\sigma} (p+|u+b_R+\gamma|)d B(u)
\]
and so the fraction of policy voters who support Right is 
\[
\frac{r}{\bar{c}} \int_{-\kappa}^{\kappa}\left[\int_{-\sigma}^{\sigma}  (p+|u+b_R+\gamma|)d B(u)  \right] g(\gamma) d\gamma.
\]
Similarly, Left's expected fraction of policy voters is 
\[
\frac{1-r}{\bar{c}} \int_{-\kappa}^{\kappa} \left[\int_{-\sigma}^{\sigma} (p+|u+b_L+\gamma|)d B(u) \right] g(\gamma) d\gamma .
\]
This yields Right's probability of winning after a Referendum as
%RV2:  I think we collected the p terms for the second term, but then left them in the third and fourth.  I removed it here but PLEASE DOUBLE CHECK this is correct (original commented out).  It makes no difference for conclusion or the proof
\begin{align*}
\frac{1}{2}+\frac{\mu}{1-\mu}\left(\frac{p}{\bar{c}}\left(r-\frac{1}{2}\right)\right)&+\frac{\mu}{1-\mu}\frac{r}{\bar{c}}\int_{-\kappa}^{\kappa}\left[\int_{-\sigma}^{\sigma}  |u+b_R+\gamma|d B(u) \right]g(\gamma) d\gamma \\
-\frac{\mu}{1-\mu}\frac{1-r}{\bar{c}} & \int_{-\kappa}^{\kappa} \left[\int_{-\sigma}^{\sigma} |u+b_L+\gamma| d B(u)  \right] g(\gamma) d\gamma.
\end{align*}
Right therefore benefits from a Referendum if and only if
\[
r\int_{-\kappa}^{\kappa} \left[\int_{-\sigma}^{\sigma} |u+b_R+\gamma| d B(u)  \right]g(\gamma) d\gamma
>(1-r)\int_{-\kappa}^{\kappa} \left[ \int_{-\sigma}^{\sigma} |u+b_L+\gamma| d B(u)  \right] g(\gamma) d\gamma
\]
or equivalently if and only if $r$ exceeds 
\begin{equation}
r_T(b_L, b_R) \equiv \frac{\int_{-\kappa}^{\kappa}\left[\int_{-\sigma}^{\sigma} |u+b_L+\gamma | d B(u) \right]g(\gamma) d\gamma}{\int_{-\kappa}^{\kappa}\left[\int_{-\sigma}^{\sigma} |u+b_L+\gamma | d B(u) \right] g(\gamma) d\gamma+\int_{-\kappa}^{\kappa}\left[\int_{-\sigma}^{\sigma}|u+b_R+\gamma | d B(u)  \right] g(\gamma) d\gamma}. \label{rthresh}
\end{equation}
\begin{comment}
This yields Right's probability of winning after a Referendum as
\begin{align*}
\frac{1}{2}+\frac{\mu}{1-\mu}\left(\frac{p}{\bar{c}}\left(r-\frac{1}{2}\right)\right)&+\frac{\mu}{1-\mu}\frac{r}{\bar{c}}\int_{-\kappa}^{\kappa}\left[\int_{-\sigma}^{\sigma}  (p+|u+b_R+\gamma|)d B(u) \right]g(\gamma) d\gamma \\
-\frac{\mu}{1-\mu}\frac{1-r}{\bar{c}} & \int_{-\kappa}^{\kappa} \left[\int_{-\sigma}^{\sigma} (p+|u+b_L+\gamma|)d B(u)  \right] g(\gamma) d\gamma.
\end{align*}
Right therefore benefits from a Referendum if and only if
\[
r\int_{-\kappa}^{\kappa} \left[\int_{-\sigma}^{\sigma}  (p+|u+b_R+\gamma|)d B(u)  \right]g(\gamma) d\gamma
>(1-r)\int_{-\kappa}^{\kappa} \left[ \int_{-\sigma}^{\sigma}  (p+|u+b_L+\gamma|)d B(u)  \right] g(\gamma) d\gamma
\]
or equivalently if and only if $r$ exceeds 
\begin{equation}
r_T(b_L, b_R) \equiv \frac{p+\int_{-\kappa}^{\kappa}\left[\int_{-\sigma}^{\sigma} |u+b_L+\gamma|d B(u)  \right]g(\gamma) d\gamma}{2p+\int_{-\kappa}^{\kappa}\left[\int_{-\sigma}^{\sigma} |u+b_L+\gamma| d B(u) \right] g(\gamma) d\gamma+\int_{-\kappa}^{\kappa}\left[\int_{-\sigma}^{\sigma}|u+b_R+\gamma|d B(u)  \right] g(\gamma) d\gamma}. \label{rthresh}
\end{equation}
\end{comment}
We get the following result.

\begin{proposition}
Right benefits from a referendum if and only if $r>r_T(b_L, b_R) \in (0, 1)$.  Furthermore $r_T(b_L,b_R)=r_T(b_L, -b_R)=r_T(-b_L,b_R)$, $r_T$ strictly decreases in $|b_R|$ and strictly increases in $|b_L|$, and $r_T=1/2$ when $|b_R|=|b_L|$.
\end{proposition}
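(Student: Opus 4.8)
The plan is to reduce the entire statement to the properties of the one-dimensional functional
\[
\Phi(b)\;\equiv\;\int_{-\kappa}^{\kappa}\!\left[\int_{-\sigma}^{\sigma}|u+b+\gamma|\,dB(u)\right]g(\gamma)\,d\gamma ,
\]
in terms of which \eqref{rthresh} reads $r_T(b_L,b_R)=\Phi(b_L)/(\Phi(b_L)+\Phi(b_R))$. Because $B$ (restricted to $[-\sigma,\sigma]$) and $g$ have strictly positive densities on the interiors of their supports, $u+b_J+\gamma$ is non-degenerate, so $\Phi(b_J)>0$ and hence $r_T(b_L,b_R)\in(0,1)$. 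The ``if and only if'' part is then immediate from the inequality derived just above \eqref{rthresh}: Right's electoral gain from a referendum is proportional to $r\,\Phi(b_R)-(1-r)\,\Phi(b_L)$, which is continuous and strictly increasing in $r$, strictly negative at $r=0$ and strictly positive at $r=1$, so it has a unique zero, namely $r_T(b_L,b_R)$, and Right benefits exactly when $r>r_T(b_L,b_R)$.

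For the symmetry relations I would prove $\Phi(b)=\Phi(-b)$ for every $b$. Substituting $u\mapsto-u$ and $\gamma\mapsto-\gamma$ and using that the density of $B$ is even and $g(\cdot)$ is even, $\Phi(-b)=\int_{-\kappa}^{\kappa}\!\int_{-\sigma}^{\sigma}|-u-b-\gamma|\,dB(u)\,g(\gamma)\,d\gamma=\Phi(b)$. Hence $r_T(b_L,b_R)$ depends on the pair $(b_L,b_R)$ only through $(|b_L|,|b_R|)$, which gives $r_T(b_L,b_R)=r_T(b_L,-b_R)=r_T(-b_L,b_R)$; and when $|b_R|=|b_L|$ we get $\Phi(b_R)=\Phi(b_L)$, so $r_T=1/2$.

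The comparative statics reduce to showing that $\Phi$ is strictly increasing on $[0,\kappa+\sigma)$. Writing $W:=u+\gamma$, the sum of two independent, symmetric, mean-zero random variables, we have $\Phi(b)=\E\,|W+b|$, and $W$ is itself symmetric with a density strictly positive on $(-(\kappa+\sigma),\kappa+\sigma)$ (its density is the convolution of the two strictly positive symmetric densities). Since $|W+b|$ is $1$-Lipschitz in $b$ and $W$ is integrable, differentiation under the integral is valid and $\Phi'(b)=\E\,\sign(W+b)=1-2F_W(-b)$; for $0<b<\kappa+\sigma$ the point $-b$ lies in the interior of $W$'s support, so $F_W(-b)<F_W(0)=1/2$ and $\Phi'(b)>0$. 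By Assumption C.1, $\kappa>\max\{|b_L|,|b_R|\}$, so $|b_L|$ and $|b_R|$ lie in this range. Finally $r_T=\Phi(b_L)/(\Phi(b_L)+\Phi(b_R))$ is strictly increasing in $\Phi(b_L)$ and strictly decreasing in $\Phi(b_R)$, so $r_T$ strictly increases in $|b_L|$ and strictly decreases in $|b_R|$.

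The only point requiring a little care is the strict inequality $\Phi'(b)>0$: it needs $W$ to place positive probability strictly between $-b$ and $0$, which is exactly where I would invoke that the convolution of the full-support symmetric densities of $u$ and $\gamma$ is strictly positive on the interior of its support, together with Assumption C.1 to keep $|b_L|$ and $|b_R|$ inside that interior. The remaining ingredients — the reduction to $\Phi$, the change-of-variables for symmetry, and the differentiation under the integral — are routine.
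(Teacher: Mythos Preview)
Your proposal is correct and follows essentially the same route as the paper: both reduce the statement to properties of the single functional $\Phi(b)=\int\!\int|u+b+\gamma|\,dB(u)\,g(\gamma)\,d\gamma$, obtain the symmetries from evenness of $B$ and $g$, and derive the monotonicity in $|b_J|$ by differentiating $\Phi$. The only cosmetic difference is packaging: you bundle $u+\gamma$ into a single symmetric random variable $W$ and write $\Phi'(b)=1-2F_W(-b)$, whereas the paper differentiates the iterated integral directly to obtain $\int_{-\kappa}^{\kappa}[B(b_J+\gamma)-B(-b_J-\gamma)]g(\gamma)\,d\gamma$, which is the same expression since $B(z)=1-B(-z)$.
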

\begin{proof}
From \eqref{rthresh} it is immediate that $r_T \in (0, 1)$ and it follows from inspection that $r_T=1/2$ when when $b_R=b_L$.  Moreover, from the symmetry of $B(\cdot)$ and $G(\cdot)$, it follows that $r_T(b_L,b_R)=r_T(b_L, -b_R)=r_T(-b_L,b_R)$.  

All that remains to prove are the comparative statics.  First note that $r_T$ increases in 
\[
\int_{-\kappa}^{\kappa}\left[\int_{-\sigma}^{\sigma} |u+b_L+\gamma| d B(u)  \right]g(\gamma) d\gamma
\]
and decreases in 
\[
\int_{-\kappa}^{\kappa}\left[\int_{-\sigma}^{\sigma} |u+b_R+\gamma| d B(u)  \right]g(\gamma) d\gamma.
\]
Note that
\begin{align*}
& \int_{-\kappa}^{\kappa}\left[\int_{-\sigma}^{\sigma} |u+b_R+\gamma| d B(u)  \right]g(\gamma) d\gamma \\
=&\int_{-\kappa}^{\kappa}\left[-\int_{-\sigma}^{-(b_J+\gamma)}(u+b_J+\gamma) \, d B(u)+\int_{-(b_J+\gamma)}^{\sigma}(u+b_J+\gamma) \, d B(u)\right]g(\gamma)\,d\gamma,
\end{align*}
and the derivative of this expression with respect to $b_J$ is 
\begin{equation}
\label{e:deriv}
\int_{-\kappa}^{\kappa}\left[B(b_J+\gamma)-B(-b_J-\gamma)\right]g(\gamma)\,d\gamma.
\end{equation}
As \eqref{e:deriv} is equal to zero when $b_J=0$ and strictly increases in $b_J$ it follows that  
\[
\int_{-\kappa}^{\kappa}\left[\int_{-\sigma}^{\sigma}|u+b_J+\gamma|\,d B(u)\right]\,g(\gamma)d\gamma
\]
is strictly increasing in $|b_J|$.  Thus $r_T$ strictly increases in $|b_L|$ and decreases in $|b_R|$.  
%Let $\hat{b}$ denote the value of $b_R$ that solves (\ref{rthresh}) with equality. We conclude that Right benefits from a referendum if and only if $b_R\ge (|b_R|)^*\equiv\max\{\hat{b},\kappa\}$, where we recall our Assumption that $b_R< \kappa$. 
\end{proof}

The intuition is straightforward.  $|b_J|$ reflects the median preference intensity of the core supporters in party $J$ on the second issue, and the party with the higher $|b_J|$ has the higher mean intensity.  A referendum increases turnout among policy voters, something which benefits the majority party when the average preference intensities are equal.    When one party has a greater average intensity however a referendum increases turnout more for that party than their rival.  The greatest incentive for Right to hold a referendum is then when it's the majority party  and its voters have intense preferences on the second issue.  However, even if in the minority, Right may prefer to initiate a referendum if the average intensity of its voters is high enough.

\end{document}